\newcommand{\p}[1]{\mathbb{P}\left[{#1}\right]}
\newcommand{\e}[1]{\mathbb{E}\left[{#1}\right]}
\newcommand{\eq}[1]{(\ref{#1})}
\newcommand{\eps}{\varepsilon}
\newcommand{\Oh}{{O}}
\newcommand{\diam}{\operatorname{diam}}
\newcommand{\sg}{G_n^*}
\newcommand{\pg}{P_n}
\newcommand{\erv}{\operatorname{Exp}}
\newcommand{\sts}{\operatorname{ST}_{\mathsf s}}
\newcommand{\sta}{\operatorname{ST}_{\mathsf a}}
\newcommand{\wastspp}{\operatorname{ast}_{\mathsf s}}
\newcommand{\wastapp}{\operatorname{ast}_{\mathsf a}}
\newcommand{\wastapull}{\operatorname{ast}_{\mathsf a}^{\mathsf {pull}}}
\newcommand{\gstspp}{\operatorname{gst}_{\mathsf s}}
\newcommand{\gstapp}{\operatorname{gst}_{\mathsf a}}
\def\1{{\mathchoice {1\mskip-4mu\mathrm l}      
{1\mskip-4mu\mathrm l} 
{1\mskip-4.5mu\mathrm l} {1\mskip-5mu\mathrm l}}} 
\def\comment#1{} 
\newcommand{\samedist}{{\,\buildrel d \over =\,}}
\newcommand{\stochle}{{\,\stackrel{s}{\le}\,}}
\theoremstyle{plain}
\newtheorem{theorem}{Theorem}
\newtheorem{lemma}[theorem]{Lemma}
\newtheorem{proposition}[theorem]{Proposition}
\newtheorem{corollary}[theorem]{Corollary}
\theoremstyle{definition}
\newtheorem*{definition}{Definition}
\newtheorem{conjecture}[theorem]{Conjecture}
\newtheorem{open}[theorem]{Open problem}
\newtheorem{observation}[theorem]{Observation}
\theoremstyle{remark}
\newcommand{\geo}{\operatorname{Geo}}
\tikzstyle{vtx}=[circle, draw, fill=black!100, inner sep=0pt, minimum size=4pt]
\tikzstyle{evtx}=[circle, draw, inner sep=0pt, minimum size=4pt]
\newcommand{\Nz}{{\mathbb{N}_0}}
\newcommand{\N}{{\mathbb{N}}}
\newcommand{\cond} {{\,\left\vert\vphantom{\frac{1}{1}}\right.\,}}
\title{On the push\&pull protocol for rumour spreading\thanks{A preliminary version of this paper has appeared in proceedings of the ACM Symposium on Principles of Distributed Computing (PODC 2015), pages 405--412.}}
\author{
H\"{u}seyin Acan\\
{\small School of Mathematical Sciences, Monash University} \\
{\small {\tt  huseyin.acan@monash.edu}} \and
Andrea Collevecchio\thanks{Supported by ARC Discovery Project grant DP140100559 {and ERC STREP project MATHEMACS}.}\\
{\small School of Mathematical Sciences, Monash University, and}\\
{\small Ca' Foscari University, Venice}\\
{\small {\tt  andrea.collevecchio@monash.edu }} 
\and
Abbas  Mehrabian\thanks{Supported by the Vanier Canada Graduate Scholarships program.}\\
{\small Department of Combinatorics and Optimization, University of Waterloo}\\
{\small {\tt amehrabi@uwaterloo.ca}}
\and
Nick Wormald\thanks{Supported by Australian Laureate Fellowships grant FL120100125.}\\
{\small School of Mathematical Sciences, Monash University} \\
{\small {\tt  nick.wormald@monash.edu }} }
\date{}
\begin{document}

\maketitle

\vspace{-1cm}
\begin{abstract}
The asynchronous push\&pull protocol, a randomized distributed algorithm
for spreading a rumour in a graph $G$, is defined as follows.
Independent exponential clocks of rate 1 are associated with the vertices of
$G$, one to each vertex.
Initially, one vertex of $G$ knows the rumour.
Whenever the clock of a vertex $x$ rings, it calls a random neighbour $y$:
if $x$ knows the rumour and $y$ does not, then $x$ tells $y$ the rumour (a
push operation),
and if $x$ does not know the rumour and $y$ knows it,
$y$ tells $x$ the rumour (a pull operation).
The average spread time of $G$ is the expected time it takes for all
vertices to know the rumour, and the guaranteed spread time of $G$ is the
smallest time $t$ such that with probability at least $1 - 1/n$, after time
$t$ all vertices know the rumour.
The synchronous variant of this protocol, in which each clock rings
precisely at times $1,2,\dots$,
has been studied extensively.

We prove the following results for any $n$-vertex graph:
In either version, the average spread time is at most linear even if only
the pull operation is used,
and the guaranteed spread time is within a logarithmic factor of the
average spread time, so it is $O(n \log n)$.
In the asynchronous version, both the average and guaranteed spread times
are $\Omega(\log n)$.  We give examples of graphs illustrating that
these bounds are best possible up to constant factors.

We also prove the first analytical relationships between the guaranteed
spread times in the two versions. Firstly, in all graphs the guaranteed spread time in the asynchronous
version is within an $O(\log n)$ factor of that in the synchronous version,
and this is tight. Next, we find examples of graphs whose asynchronous spread times are
logarithmic, but the synchronous versions are polynomially large. 
Finally, we show for any graph that the ratio of the 
synchronous spread time to the asynchronous spread time is
$O\big(n^{2/3}\big)$. 
\end{abstract}

\section{Introduction}
\label{sec:intro}
Randomized rumour spreading is an important primitive for information dissemination in networks and has
numerous applications in network science, ranging from spreading information in the WWW and Twitter to spreading viruses and diffusion of ideas in human communities. A well studied rumour spreading protocol is the  \emph{(synchronous) {push\&pull} protocol},
introduced by Demers, Greene, Hauser, Irish, Larson, Shenker, Sturgis, Swinehart, and Terry~\cite{DGH+87}
and popularized by Karp, Schindelhauer, Shenker, and V\"ocking~\cite{KSSV00}.
Suppose that one node in a  network is aware of a piece of information, the `rumour',
and wants to spread it to all nodes quickly.
The protocol proceeds in rounds.
In each round, every \emph{informed} node contacts  a random neighbour and sends the rumour to it (`pushes' the rumour),
and every \emph{uninformed} nodes contacts a random neighbour and gets the rumour if the neighbour  knows it (`pulls' the rumour).

A point to point communication network can be modelled as an undirected graph: the nodes represent the processors and the links represent communication channels between them. 
Studying rumour spreading has several applications to distributed computing in such networks, of which we mention just two. 
The first is in broadcasting algorithms: a single processor wants to broadcast a piece of information to all other processors in the network (see~\cite{broadcasting_survey} for a survey). 
There are at least four advantages to 
the push\&pull protocol:
it puts much less load on the edges than naive flooding,
it is simple (each node makes a simple local decision in each round; no knowledge of the global topology is needed; no state is maintained), scalable (the protocol is independent of the size of network: it does not grow more complex as the network grows) and robust (the protocol tolerates random node/link failures without the use of error recovery mechanisms, see~\cite{FPRU90}).
A second application comes from the maintenance of
databases replicated at many sites, e.g., yellow pages, name servers, or server directories. 
There are updates injected at various nodes, and these updates must propagate to all nodes in the network. 
In each round, a processor communicates with a random neighbour and they share any new information, so that eventually all copies of the database converge to the same contents. See~\cite{DGH+87} for details.
Other than the aforementioned applications, rumour spreading protocols have successfully been applied in various contexts such as resource discovery~\cite{Harchol-Balter1999},
distributed averaging~\cite{Boyd2006}, data aggregation~\cite{KDG03},
and the spread of computer viruses~\cite{BBCS05}.

In this paper we only consider simple, undirected and connected graphs.
Given a graph and a starting vertex, the \emph{spread time} of a certain protocol is the time it takes for
the rumour to spread in the whole graph, i.e.\ the time difference between the moment the protocol is initiated and the moment when everyone learns the rumour.
For the synchronous push\&pull protocol, it turned out that the spread  time is closely related to the \emph{expansion profile} of the graph.
Let $\Phi(G)$ and $\alpha(G)$ denote the conductance
and the vertex expansion
of a graph $G$, respectively.
After a series of results by various scholars, Giakkoupis~\cite{Gia11,G13} showed the spread time  is
${\Oh}\left(\min\{\Phi(G)^{-1}\cdot{\log n}, \alpha(G)^{-1} \cdot \log^2 n \}\right)$.
This protocol has recently been used to model news propagation in social networks.
Doerr, Fouz, and Friedrich~\cite{DFF11} 
proved an upper bound of $O(\log n)$ for the spread time on Barab\'{a}si-Albert graphs, and
Fountoulakis, Panagiotou, and Sauerwald~\cite{FPS12} proved the same upper bound (up to constant factors) for the spread time on Chung-Lu random graphs.

All the above results assumed a synchronized model, i.e.\  all nodes take action simultaneously at discrete time steps. 
In many applications and certainly in real-world social networks, this assumption is not very plausible. 
Boyd, Ghosh, Prabhakar, Shah~\cite{Boyd2006} proposed an asynchronous time model with a continuous time line. 
Each node has its own independent clock that rings at the times of a rate 1 Poisson process.
(Since the times between rings is an exponential random variable, we shall call this an \emph{exponential clock}.)
The protocol now specifies for every node what to do when its own clock rings.
The rumour spreading problem in the asynchronous time model has so far received less attention. 
{Rumour spreading protocols in this model turn out
to be closely related to Richardson's model for the spread of a disease~\cite{richardson_model_survey} and to first-passage percolation~\cite{fpp_book}  with edges having i.i.d.\ exponential weights. 
The main difference is that in rumour spreading protocols each vertex contacts one neighbour at a time. 
So, for instance in the `push only' protocol, the net communication rate outwards from a vertex is fixed, and hence the rate that the vertex passes the rumour to any one given neighbour is inversely proportional to its degree (the push\&pull protocol is a bit more complicated). 
Hence, the degrees of vertices play a crucial role not seen in Richardson's model or first-passage percolation.
However, on  regular graphs, the asynchronous push\&pull protocol,
Richardson's model, and first-passage percolation are essentially the same process, assuming appropriate parameters are chosen.}
In this sense, Fill and Pemantle~\cite{pemantle}
and Bollob{\'a}s and Kohayakawa~\cite{bol} showed that
a.a.s.\ the spread time {of the asynchronous push\&pull protocol} is $\Theta(\log n)$ on the hypercube graph. 
Janson~\cite{asynchronous_complete} {and Amini, Draief and Lelarge~\cite{dregular_asynch}} showed the same results (up to constant factors) for the complete graph
{and for random regular graphs, respectively.}
These bounds match the same
order of magnitude as in the synchronized case. 
Doerr, Fouz, and Friedrich~\cite{experimental} experimentally compared the spread time in the two time models.
They state that
`Our experiments show that the asynchronous model
is faster on all graph classes [considered here].'
However, a general relationship between the spread times of the two variants has not been proved theoretically.

Fountoulakis, Panagiotou, and Sauerwald~\cite{FPS12} studied the asynchronous push\&pull protocol on  Chung-Lu random graphs with exponent between $2$ and $3$. 
For these graphs, they showed that a.a.s.\ after some constant time, $n-o(n)$ nodes are informed.
Doerr, Fouz, and Friedrich~\cite{DFF12} showed that for the preferential attachment graph (the non-tree case), a.a.s.\ all but $o(n)$ vertices receive the rumour in time $O\left(\sqrt {\log n} \right)$, but to inform
all vertices a.a.s., $\Theta(\log n)$ time is necessary and sufficient.
Panagiotou and Speidel~\cite{speidel} studied this protocol on Erd\H{o}s-Renyi random graphs and proved that if the average degree is $(1+\Omega(1))\log n$, a.a.s.\ the spread time is $(1+o(1))\log n$.

\subsection{Our contribution}
{
In this paper we answer a  fundamental question about the asynchronous push\&pull protocol: what are the minimum and maximum spread times on an $n$-vertex graph?
Our proof techniques yield new results on the well studied synchronous version as well.
We also compare the performances of the two protocols on the same graph, and prove the first theoretical relationships between their spread times.

We now formally define the protocols.
In this paper $G$ denotes the ground graph which is simple and connected, and 
$n$ counts its vertices, and is assumed to be sufficiently large.

\begin{definition}[Asynchronous push\&pull protocol]
Suppose that an independent exponential clock of rate 1 is associated with each vertex of $G$.
Suppose that initially, some vertex $v$ of $G$ knows a piece of information, the so-called \emph{rumour}. 
The rumour spreads in $G$ as follows.
Whenever the clock of a vertex $x$ rings,
this vertex performs an `action':
it calls a random neighbour $y$;
if $x$ knows the rumour and $y$ does not, then $x$ tells $y$ the rumour (a \emph{push} operation),
and if $x$ does not know the rumour and $y$ knows it,
$y$ tells $x$ the rumour (a \emph{pull} operation).
Note that if both $x$ and $y$ know the rumour or neither of them knows it, then this action is useless.
Also, vertices have no memory, hence $x$ may call the same neighbour several consecutive times.
The \emph{spread time} of $G$ starting from $v$, written $\sta(G,v)$, is the first time that all vertices of $G$ know the rumour.
Note that this is a continuous random variable, with two sources of randomness: the Poisson processes associated with the vertices, and random neighbour-selection of the vertices.
The \emph{guaranteed spread time} of $G$, written $\gstapp(G)$, is the smallest deterministic number $t$ such that for every $v\in V(G)$ we have 
$\p{\sta(G,v) > t} \le 1/n$. 
The \emph{average spread time} of $G$, written $\wastapp(G)$, is the smallest deterministic number $t$ such that for every $v\in V(G)$ we have 
$\e{\sta(G,v)} \le t$. 
\end{definition}

\begin{definition}[Synchronous push\&pull protocol]
Initially some vertex $v$ of $G$ knows the rumour, which spreads in $G$ in a round-robin manner:
in each round $1,2,\dots$, all vertices perform actions simultaneously. 
That is, each vertex $x$ calls a random neighbour $y$;
if $x$ knows the rumour and $y$ does not, then $x$ tells $y$ the rumour (a \emph{push} operation),
and if $x$ does not know the rumour and $y$ knows it,
$y$ tells $x$ the rumour (a \emph{pull} operation).
Note that this is a synchronous protocol, e.g.\ a vertex that receives a rumour in a certain round cannot send it on in the same round.
The \emph{spread time} of $G$ starting from $v$,  $\sts(G,v)$, is the first time that all vertices of $G$ know the rumour.
Note that this is a discrete random variable, with one source of randomness: the random neighbour-selection of the vertices.
The \emph{guaranteed spread time} of $G$, written $\gstspp(G)$, and
the \emph{average spread time} of $G$, written $\wastspp(G)$, 
are defined in an analogous way to the asynchronous case.
\end{definition}

We remark that the notion of `guaranteed spread time' was first defined by Feige, Peleg, Raghavan and Upfal~\cite{FPRU90} under the name `almost sure rumor coverage time' for the `push only' protocol.
(In this protocol, which was studied prior to push\&pull, the informed nodes push the rumour, but the uninformed ones do nothing.
The `pull only' protocol is defined conversely.)

It turns out that changing the starting vertex affects the spread time by at most a multiplicative factor of 2.
Specifically, in Proposition~\ref{pro:start}
we prove that for any two vertices $u$ and $v$,
$\sts(G,u) \stochle 2 \sts(G,v)$
and
$\sta(G,u) \stochle 2 \sta(G,v)$.
(For random variables $X$ and $Y$, $X \stochle Y$ means $X$ is \emph{stochastically dominated by} $Y$, that is, for any $t$,
$\p{X\ge t} \le \p{Y \ge t}$.)
These imply $\wastspp(G) \le 2 \e{\sts(G,v)}$
and $\wastapp(G) \le 2 \e{\sta(G,v)}$
for any vertex $v$. 

Our first main result is the following theorem.

\begin{theorem}
\label{thm:extremal}
For large enough $n$, the following hold  for any $n$-vertex graph $G$.
{\begin{align}
(1-1/n) \wastapp(G) & \le \gstapp(G) \le 
e  \wastapp(G) \log n \:,
\label{expectedguaranteed}\\
\frac15 \log n & <  \wastapp (G)  < 4 n \:,  
\label{expectedextremal}\\
  \frac15 \log n & \leq \gstapp (G) \leq 4 e n \log n \:.\label{guaranteedextremal}
\end{align}}
Moreover, these bounds are {asymptotically  best possible, up to the constant factors}.
\end{theorem}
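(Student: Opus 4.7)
Both bounds in~\eqref{expectedguaranteed} rest on a monotonicity coupling: using the same clocks and the same random neighbour choices, a push\&pull process started from a larger set of informed vertices finishes no later than one started from a smaller set. Consequently, if at some time~$s$ the informed set $S$ contains the source $v$, then the residual time to inform everyone is stochastically dominated by an independent copy of $\sta(G,v)$. The upper bound then follows from Markov's inequality, which gives $\p{\sta(G,v) > e\,\wastapp(G)} \le 1/e$, and iterating this $\lceil \log n\rceil$ times through the coupling pushes the tail down to $1/n$. The lower bound follows from the same coupling: iterating gives $\p{\sta(G,v) > k\,\gstapp(G)} \le n^{-k}$ for every positive integer $k$, and integrating the tail yields $\e{\sta(G,v)} \le \gstapp(G)\cdot n/(n-1)$, so taking the supremum over $v$ gives $\gstapp(G) \ge (1 - 1/n)\,\wastapp(G)$.

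For the upper bound in~\eqref{expectedextremal} I would show that even the pull-only variant attains $O(n)$. Fix a spanning tree $T$ of $G$ rooted at the source, and for each vertex $u$ define the \emph{restricted} informing time $\tilde T_u$, in which $u$ may only pull the rumour from its parent $p(u)$ in $T$; then $T_u \le \tilde T_u$ and $\tilde T_u = \tilde T_{p(u)} + X_u$ for an independent $X_u \sim \erv(1/\deg_G(u))$. A careful choice of $T$ combined with tight control of $\e{\max_u \tilde T_u}$ (rather than the much weaker $\max_u \e{\tilde T_u}$ or $\sum_u \e{\tilde T_u}$) should yield the bound $\wastapp(G) < 4n$.

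The main obstacle is the lower bound in~\eqref{expectedextremal}, namely $\wastapp(G) > (\log n)/5$: one would like to dominate $N(t)$, the number of informed vertices at time $t$, by a process with $\e{N(t)} \le \exp(Ct)$ for a universal $C$, so that $\e{N((\log n)/(5C))} = o(n)$ and a Markov-style argument gives the desired bound. The difficulty is that pull can transiently drive the instantaneous rate of new informings well above $N$ itself (e.g.\ on a star rooted at its centre, the very first transition happens at rate $\Theta(n)$), so a direct domination by a Yule-type process is unavailable. The argument therefore has to combine the crude fact that the total new-information rate never exceeds $n$ with a more refined accounting showing that, at each stage, the number of clock rings actually capable of producing a new informed vertex is limited.

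Finally, the two inequalities in~\eqref{guaranteedextremal} follow immediately by composing~\eqref{expectedguaranteed} with the previous two displays. For tightness up to constants, I would exhibit the extremal graphs: the complete graph $K_n$ realises $\wastapp = \gstapp = \Theta(\log n)$, while the path $P_n$ realises $\wastapp = \Theta(n)$ and, by analysing the accumulation of independent exponentials along its edges together with concentration, $\gstapp = \Theta(n \log n)$.
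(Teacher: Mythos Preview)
Your argument for~\eqref{expectedguaranteed} is correct and is exactly the paper's: the restarting/monotonicity coupling gives $\p{\sta(G,v)>kt}\le\p{\sta(G,v)>t}^k$, Markov plus iteration gives the upper bound, and integrating the geometric tail gives the lower bound. Likewise,~\eqref{guaranteedextremal} follows from~\eqref{expectedguaranteed} and~\eqref{expectedextremal} in both your write-up and the paper's.

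There is a real gap in your plan for the upper bound $\wastapp(G)<4n$. The spanning-tree/restricted-pull idea gives $\e{\tilde T_u}=\sum_{w\text{ on root--}u\text{ path}}\deg_G(w)$, and you would need (i) a tree in which this sum is $O(n)$ for \emph{every} $u$, and (ii) a passage from $\max_u \e{\tilde T_u}$ to $\e{\max_u\tilde T_u}$. Neither step is supplied, and (i) is already nontrivial: a badly chosen tree can give sums of order $n^2$ (e.g.\ a Hamiltonian path inside a clique), so one must actually exhibit a good tree, and it is not clear any standard construction works uniformly. The paper bypasses both issues with a short induction on the pair $(m,b)=(\#\text{uninformed},\#\text{boundary})$ using the potential $4m-2b$: if some boundary vertex $v$ has $d_R(v)\ge d_B(v)$ one waits only for $v$ (cost $\le 1+2d_R(v)$, boundary grows by $d_R(v)-1$); otherwise each boundary vertex pulls at rate $\ge 1/(2b)$, so the next arrival costs $\le 2$. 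This is pull-only and yields $4n$ directly.

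For the lower bound $\wastapp(G)>\tfrac15\log n$ you correctly identify the obstacle (pull can make the instantaneous rate $\Theta(n)$) but stop short of an argument. The paper's key observation is to look, for each $v\ne v_0$, at the first time any call is made along an edge incident to $v$; this time is $\erv(f(v))$ with $f(v)=1+\sum_{u\sim v}\deg(u)^{-1}$, and crucially $\sum_v f(v)=2n$. The spread time dominates $X=\max_{v\ne v_0}\erv(f(v))$, and an AM--GM computation with the constraint $\sum f(v)\le 2n$ gives $\p{X<(\log n)/3}\le e^{-1}$, hence $\e X>(\log n)/5$. This sidesteps the Yule-process difficulty entirely.

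Finally, your tightness examples need repair. Concentration on the path gives $\gstapp(P_n)\sim n$, \emph{not} $\Theta(n\log n)$: the sum of $n$ rate-$1$ exponentials is $(1+o(1))n$ with probability $1-o(1/n)$, so the path shows only that the upper bound in~\eqref{expectedextremal} is tight. The example realising $\gstapp=\Theta(n\log n)$ (and hence tightness of the right-hand sides of~\eqref{expectedguaranteed} and~\eqref{guaranteedextremal}) is the \emph{double star}: two adjacent centres of degree $n/2$ with $n-2$ leaves; the central edge has communication time $\erv(4/n)$, so one needs $\Theta(n\log n)$ to cross it with probability $1-1/n$, while the expectation is only $\Theta(n)$.
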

\noindent Our proof of the right-hand bound in~\eq{expectedextremal} is based on the pull operation only, so this bound applies equally well to the `pull only' protocol. 

The arguments
for~\eq{expectedguaranteed} and the right-hand bounds in~\eq{expectedextremal} and~\eq{guaranteedextremal} can easily be extended to the synchronous variant, giving the following theorem. 
{The bound~\eq{guaranteedextremals} below also follows from~\cite[Theorem~4]{robustness}, but here we also show its tightness.}

\begin{theorem}
\label{thm:extremalsynchronous}
The following   hold for any $n$-vertex graph $G$.
\begin{align}
(1-1/n) \wastspp(G)  \le \ \gstspp(G) \ & \le 
e  \wastspp(G) \log n \:,
\label{expectedguaranteeds}\\
 \wastspp (G)  & < 4.6 n
\label{expectedextremals}\:,\\
\gstspp (G) & < 4.6 e  n \log n \:.\label{guaranteedextremals}
\end{align}
Moreover, these bounds are {asymptotically  best possible, up to the constant factors}.
\end{theorem}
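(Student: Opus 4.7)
My plan is to mirror the proofs of Theorem~\ref{thm:extremal}, translating each step into the discrete synchronous setting. The key tool for \eq{expectedguaranteeds} is an iterative monotone coupling: whatever set $S$ of vertices is informed at some time $t$, continuing the protocol from there is stochastically dominated by running the protocol afresh from any single $u\in S$. Setting $t^\ast := \gstspp(G)$, this coupling gives $\p{\sts(G,v)>kt^\ast}\le n^{-k}$ by induction on $k$, and summing tail probabilities yields $\e{\sts(G,v)}\le t^\ast n/(n-1)$, which is the left-hand inequality. For the right-hand inequality, set $w:=\wastspp(G)$; Markov's inequality gives $\p{\sts(G,v)>\lceil ew\rceil}\le 1/e$, and iterating the same coupling over blocks of length $\lceil ew\rceil$ gives $\p{\sts(G,v)>k\lceil ew\rceil}\le e^{-k}$, which is at most $1/n$ once $k=\lceil\log n\rceil$, so $\gstspp(G)\le e\wastspp(G)\log n$.

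For \eq{expectedextremals} I would replay the pull-only path argument used for the right-hand bound in \eq{expectedextremal}: fix any vertex $w$, identify a convenient path (or BFS tree) from the source to $w$, and bound the expected number of rounds needed for the rumour to traverse each edge. In the synchronous regime each edge-transmission time is a geometric rather than an exponential random variable; the only real bookkeeping is to pin down the constant factor, which comes out to $4.6$ rather than $4$. Given \eq{expectedextremals}, the bound \eq{guaranteedextremals} follows immediately by chaining with the right-hand part of \eq{expectedguaranteeds}: $\gstspp(G)\le e\wastspp(G)\log n < 4.6\,en\log n$.

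For tightness, the path $P_n$ already has $\wastspp(P_n)=\Theta(n)$ (the rumour must traverse $n-1$ edges, each in a constant-mean geometric number of rounds) and, by concentration of the sum of independent geometrics, also $\gstspp(P_n)=\Theta(n)$; this witnesses the sharpness both of \eq{expectedextremals} and of the left-hand inequality in \eq{expectedguaranteeds}. For the right-hand inequality in \eq{expectedguaranteeds} and for \eq{guaranteedextremals}, I would consider two cliques on $n/2$ vertices joined by a single bridge edge, with the source placed in one clique: the bridge fires with probability $\Theta(1/n)$ in any given round, so crossing it takes $\Theta(n)$ rounds in expectation but $\Omega(n\log n)$ rounds with probability much larger than $1/n$, giving $\wastspp=\Theta(n)$ and $\gstspp=\Theta(n\log n)$ simultaneously.

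The main obstacle I foresee is nailing down the explicit constant $4.6$ in \eq{expectedextremals}. The shape of the path-based argument is inherited from the asynchronous analogue, but the passage from exponential clocks to synchronous rounds introduces correlations (every vertex acts each round, and the firings of different boundary edges are no longer independent), so one must be careful not to lose a large constant factor in the translation.
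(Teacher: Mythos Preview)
Your treatment of \eq{expectedguaranteeds} and \eq{guaranteedextremals}, and your tightness examples, are essentially what the paper does (the paper uses the double star $DS_n$ rather than two cliques with a bridge, but the mechanism is the same: one bottleneck edge whose endpoints both have degree $\Theta(n)$). The problem is \eq{expectedextremals}.

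First, the paper's proof of the right-hand bound in \eq{expectedextremal} is \emph{not} a path argument. It is a pull-only induction on the pair $(m,b)$, where $m$ is the number of uninformed vertices and $b$ the number of boundary vertices (uninformed vertices with an informed neighbour); the invariant is that the expected remaining time is at most $4m-2b$. The proof of \eq{expectedextremals} replays exactly this induction. The only change is in the second case (every boundary vertex $v$ has $d_R(v)<d_B(v)$): the waiting time until some boundary vertex pulls becomes $1+\min_{i\le b}X_i$ with the $X_i$ independent and each stochastically dominated by $\geo(1/2b)$, and one computes
\[
\e{1+\min_i X_i}\le \sum_{t\ge 0}(1-1/(2b))^{tb}\le\sum_{t\ge 0}e^{-t/2}=\frac{\sqrt e}{\sqrt e-1}=:\alpha\,,
\]
in place of the asynchronous constant $2$. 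This gives the potential $(2+\alpha)m-2b$ and hence $\wastspp(G)<(2+\alpha)n<4.6n$. Your worry about correlations is dissolved precisely by restricting to pull: distinct boundary vertices then act independently each round.

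Second, the path/BFS-tree approach you describe has a genuine gap. Summing $\e{T(e)}$ along a shortest path bounds, for each fixed target $w$, the expected hitting time of $w$; it does \emph{not} bound $\e{\sts(G,v)}=\e{\max_w(\text{time to reach }w)}$. To pass from pointwise expectations to the expected maximum you need concentration plus a union bound over $n$ targets, and the concentration of $\sum_{e\in P}T(e)$ is weak whenever the path contains an edge with both endpoints of high degree: a single such $T(e)$ is a geometric with mean and standard deviation both $\Theta(n)$. So this route yields at best $O(n\log n)$ rather than $O(n)$, and there is no visible way to extract the constant $4.6$.
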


\begin{open}
\label{openbestconstants}
Find the best possible constants factors in Theorems~\ref{thm:extremal}
and~\ref{thm:extremalsynchronous}.
\end{open}

We next turn to studying the relationship between the asynchronous and synchronous variants on the same graph.
Let ${H}_n := 
\sum_{i=1}^{n} 1/i $ denote the $n$th harmonic number.
It is well known that ${H}_n=\log n + O(1)$.

\begin{theorem}
\label{thm:allring}
For any $G$ we have
$\wastapp (G) \leq  {H}_n \times  \wastspp(G) $
and 
$\gstapp (G) \leq 8  \gstspp(G) \log n $,
and these bounds are best possible, up to the constant factors.
\end{theorem}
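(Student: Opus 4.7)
My plan is to couple the asynchronous process on $G$ with a synchronous process on $G$ so that the asynchronous process completes by the same ``round count'' as the synchronous one, at the cost of a time dilation of expected factor $H_n$. Partition the asynchronous timeline into \emph{phases}: Phase $k$ ends at the first instant after the end of Phase $k-1$ by which every vertex has rung at least once since the start of Phase $k$. By the strong Markov property of the Poisson clocks, the phase lengths $L_1,L_2,\ldots$ are i.i.d., each distributed as the maximum of $n$ independent $\erv(1)$ variables, so $\e{L_k}=H_n$ and $\p{L_k>t}\le n e^{-t}$. For each vertex $v$ and each $k$, let $f_k(v)$ denote the neighbor $v$ calls on its \emph{first} ring within Phase $k$; the $f_k(v)$'s are uniform on $N(v)$ and mutually independent, so I can couple them with the random neighbors chosen in round $k$ of a synchronous run on $G$ from the same source.

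The key step is a \emph{domination lemma}: at the end of Phase $k$ the informed set of the asynchronous process contains the informed set after round $k$ of the coupled synchronous process. This follows by induction on $k$: any vertex $v$ informed in the synchronous round $k$ either was informed before (and hence already informed asynchronously by the inductive hypothesis), or was pushed to by some informed neighbor $u$ via $f_k(u)=v$ (the same call happens during the asynchronous phase, with $u$ still informed at its first ring of the phase, so either $v$ is informed by that push or by something earlier in the phase), or pulled from some informed neighbor $u$ via $f_k(v)=u$ (handled symmetrically); the second, third, \ldots rings within a phase can only enlarge the asynchronous informed set. I expect this lemma to be the main technical hurdle, in particular verifying that the sequential ordering of calls within a phase does not disrupt the ``simultaneous'' semantics of the synchronous round.

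Granted the lemma, let $\tau=\sts(G,v)$ in the coupled synchronous run; then $\sta(G,v)\le\sum_{k=1}^{\tau}L_k$. Since the $L_k$'s depend only on the Poisson clocks while $\tau$ depends only on the random neighbor choices, $\tau$ is independent of $(L_k)_{k\ge 1}$, and Wald's identity gives $\e{\sta(G,v)}\le H_n\,\e{\tau}\le H_n\,\wastspp(G)$, which is the expectation bound. For the guaranteed bound, set $s=\gstspp(G)$. A standard ``restart at round $s$'' argument, using that starting from a larger informed set can only speed up the synchronous process, produces $\p{\tau>2s}\le 1/n^2$. Combining the tail $\p{L_k>t}\le ne^{-t}$ via a union bound with the crude inequality $\sum_{k=1}^{2s}L_k\le 2s\max_{k\le 2s} L_k$ yields $\p{\sum_{k=1}^{2s}L_k>8s\log n}\le 1/(2n)$; summing the two bad events proves $\p{\sta(G,v)>8s\log n}\le 1/n$, as required.

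Finally, tightness follows from $G=K_{1,n-1}$ with a leaf as source. The synchronous process finishes deterministically in $2$ rounds (one push from source to center, then one round of pulls from the center by all remaining leaves), so $\wastspp(G)=\gstspp(G)=2$. Asynchronously, once the (quickly informed) center has the rumour, each uninformed leaf becomes informed at rate at least $1$ through its own clock, so the time until the last leaf is informed is bounded below by the maximum of $n-1$ independent $\erv(1)$ variables and is $\Theta(\log n)$ both in expectation and with probability $1-1/n$. Hence both ratios $\wastapp/\wastspp$ and $\gstapp/\gstspp$ are $\Theta(\log n)$ on this graph, matching the upper bounds up to constant factors.
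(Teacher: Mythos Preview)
Your proof is correct and follows essentially the same approach as the paper. Both arguments partition the asynchronous timeline into intervals during which every clock rings, couple the first-ring neighbour choices to synchronous rounds via shared calling lists, establish the domination lemma by induction (the paper phrases this via a ``decelerated'' variant $\mathcal{D}$), and apply Wald's identity for the expectation bound. For the guaranteed bound there is a cosmetic difference: the paper uses \emph{fixed} subintervals of length $4\log n$ and bounds the probability that some clock fails to ring in one of them, whereas you keep the variable-length phases and bound $\sum_{k\le 2s}L_k$ via $2s\max_{k\le 2s}L_k$; both routes land on the same $8\gstspp(G)\log n$. Tightness via the star is also handled identically.

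One small slip in your tightness paragraph: on the star each edge has communication rate $1+\tfrac{1}{n-1}=\tfrac{n}{n-1}>1$ (the centre can push as well as be pulled from), so the time for a given leaf is $\erv(n/(n-1))$, which is stochastically \emph{smaller} than $\erv(1)$; thus the remaining time after the centre is informed is not bounded \emph{below} by a maximum of $\erv(1)$'s. The fix is immediate: it is bounded below by the maximum of $n-2$ i.i.d.\ $\erv(n/(n-1))$ variables, whose mean is $\tfrac{n-1}{n}H_{n-2}\sim\log n$, so your $\Theta(\log n)$ conclusion is unaffected.
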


For all graphs we examined stronger results hold, which suggests the following conjecture.

\begin{conjecture}
For any $n$-vertex graph $G$ we have
$\wastapp(G) \le \wastspp(G) + O(\log n)$
and
$\gstapp(G) \le \gstspp(G) + O(\log n)$.
\end{conjecture}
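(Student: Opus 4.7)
The natural strategy is to couple the two protocols via a shared source of neighbour-selection randomness. Fix for each vertex $v$ an i.i.d.\ sequence $y_v^{(1)}, y_v^{(2)}, \ldots$ of uniformly chosen neighbours of $v$, and declare that in the synchronous protocol $v$ calls $y_v^{(t)}$ in round $t$, while in the asynchronous protocol $v$ calls $y_v^{(k)}$ on its $k$-th ring. Writing $T_s := \gstspp(G)$ (resp.\ $\wastspp(G)$), the goal reduces to showing that under this coupling, all vertices are informed in the asynchronous protocol by time $T_s + c\log n$ with sufficiently high probability; the guaranteed-version bound then follows directly, and the average-version bound follows by tail integration together with the universal bound $\wastapp(G)\le 4n$ from Theorem~\ref{thm:extremal} to control the tail beyond $\gstapp$.

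The first natural ingredient is Poisson concentration for the individual ring counts. Let $N_v(t)$ denote the number of rings of $v$ by time $t$: by Chernoff and a union bound over $v$, with probability $1-O(1/n)$ one has $N_v(t)\ge t-O(\sqrt{t\log n}+\log n)$ simultaneously for every $v$. Combining this with the coupling, I would try to show that the set of informed vertices at time $T_s+\Theta(\log n)$ in the asynchronous run dominates (by set inclusion) the set of vertices informed after $T_s$ rounds in the synchronous run on the same randomness.

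The main obstacle is that such a monotone dominance is not automatic: the asynchronous protocol can reorder actions, so that a vertex may use up its early rings before its neighbours become informed, ``wasting'' them relative to what happens synchronously, and a naive coupling does not preserve containment of the informed set. Moreover, the Chernoff gap $O(\sqrt{t\log n})$ dominates the desired $O(\log n)$ whenever $\wastspp(G)$ is polynomial in $n$, so a one-shot concentration argument cannot close the gap. To get around both difficulties I would attempt a multi-scale argument: on each of $O(\log n)$ doubling time windows of lengths $1,2,4,\ldots,T_s$, the CLT gives an offset between asynchronous clock time and synchronous round count of only $O(\sqrt{\text{window length}})$, but the process has opportunities to \emph{re-synchronise} between windows through coupon-collector events in which every vertex rings sufficiently many times, each of which costs only $O(\log n)$ additional time and can be charged to a single logarithmic slack. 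The cleanest version of this, which I expect to be the heaviest lifting, would be a potential-function or monotone-coupling argument—for instance, tracking the graph distance from the informed set to the furthest uninformed vertex, or an entropy-like statistic of the informed set—showing that after a one-off delay of $\Theta(\log n)$ the asynchronous informed-set process stochastically dominates its synchronous counterpart for all later times, whereupon the conjecture follows immediately by evaluating at $T_s$.
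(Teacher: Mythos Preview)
The statement you are trying to prove is a \emph{conjecture}: the paper states it but gives no proof. So there is nothing to compare your proposal to, and the relevant question is simply whether your sketch constitutes a proof. It does not.

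You correctly set up the shared-randomness coupling and correctly identify the obstacle: under this coupling the asynchronous informed set does \emph{not} dominate the synchronous one, because a vertex may burn through its early list entries before its neighbours are informed. From that point on your proposal is a wish list rather than an argument. The ``multi-scale'' paragraph proposes to cut $[0,T_s]$ into $O(\log n)$ doubling windows and to ``re-synchronise'' between windows at cost $O(\log n)$ each, but (i) that already yields $O(\log^2 n)$ additive slack, not $O(\log n)$; (ii) more importantly, you give no mechanism by which a single coupon-collector event restores containment of the informed sets --- the wasted list entries are gone regardless of how many subsequent rings occur, so ``re-synchronisation'' in the sense you need does not follow from all clocks ringing. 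The final sentence invokes an unspecified potential function or monotone coupling and declares that after an $O(\log n)$ delay asynchronous dominates synchronous ``for all later times''; this is precisely the conjecture restated, not a step toward proving it.

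In short: your diagnosis of the difficulty is accurate, but nothing after that diagnosis goes beyond the level of the original conjecture. If you want to pursue this, you would need either a genuinely new coupling (not the list-sharing one) under which dominance \emph{does} hold, or a concrete potential function with a provable drift inequality --- neither of which is supplied here, and neither of which is known.
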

Our last main result is the following theorem, whose proof is somewhat technical, and uses couplings with the sequential rumour spreading protocol.

\begin{theorem}
\label{thm:last}
For any $\alpha \in [0,1)$ we have
$${\gstspp(G)} \le  n^{1-\alpha} + 64 {\gstapp(G)} n ^{(1+\alpha)/2} \:.$$
\end{theorem}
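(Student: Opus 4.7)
The strategy is to introduce a \emph{sequential} auxiliary protocol---at each step, a uniformly random vertex performs a single push/pull action---couple it tightly to the asynchronous protocol via Poisson superposition, and then couple the sequential protocol with the synchronous one through a block decomposition.

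First, set $T := \gstapp(G)$ and $N := \lceil 4nT\rceil$. The asynchronous protocol is precisely the sequential one whose inter-step times are i.i.d.\ $\mathrm{Exp}(n)$, so the number of rings occurring in $[0,T]$ is Poisson$(nT)$. Since the asynchronous protocol finishes within time $T$ with probability at least $1-1/n$, a standard Chernoff bound on this Poisson count then yields that the sequential protocol informs every vertex within $N$ steps with probability at least $1-2/n$.

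Second, fix $\alpha\in[0,1)$, set the block length $B := \lceil n^{(1-\alpha)/2}\rceil$, and partition the first $N$ sequential steps into $K\le 5\,T n^{(1+\alpha)/2}$ consecutive blocks of $B$ steps. Define the \emph{chain depth} $d_i$ of block $i$ to be the maximum $\ell$ such that one can find steps $j_1<\dots<j_\ell$ inside the block for which step $j_m$ informs the vertex selected at step $j_{m+1}$. Since each selection is uniform on $V(G)$ independently of the past, a direct counting argument gives $\p{d_i\ge k}\le\binom{B}{k}\,n^{-(k-1)}$, hence $\e{d_i}\le 1 + 2n^{-\alpha}$ by summing the resulting geometric tail. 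A Bernstein/Chernoff concentration argument across the (nearly independent) blocks then yields $\sum_{i=1}^K d_i\le 2K + n^{1-\alpha}$ with probability at least $1-1/n$; here the deterministic $n^{1-\alpha}$ summand absorbs the rare contribution of atypically deep chains.

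Third, I would build a coupling in which $d_i$ synchronous rounds executed after the simulation of blocks $1,\dots,i-1$ suffice for the synchronous informed set to dominate the informed set of the sequential process at the end of block $i$. The idea is that the $k$-th of these rounds plays the role of ``chain-level $k$'' of the block: for each step of block $i$ that selects a vertex $v$ at chain-level $k$, identify the block's coin for that selection with $v$'s action coin in the $k$-th synchronous round, while the remaining roughly $n-B$ vertex actions in that round use fresh independent coins. Monotonicity of the push\&pull protocol in the informed set (gaining informed vertices can never lose informed vertices) then allows a round-by-round domination argument. Summing over blocks, the total number of synchronous rounds needed is $\sum_i d_i\le n^{1-\alpha} + 10\,T n^{(1+\alpha)/2}$; slack in the constants lets this be rewritten as $n^{1-\alpha}+ 64\,\gstapp(G)\,n^{(1+\alpha)/2}$ after a careful accounting of the failure probabilities from the three steps, each of which is $O(1/n)$.

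The main obstacle is the coupling in the third step. In a synchronous round all $n$ vertices act with independent fresh neighbour choices, while a block contains only $B$ selections that may repeat; one must carefully pair up the coin flips so that (i) the informed set grows level-by-level in lockstep with the block, (ii) the ``extra'' synchronous actions using fresh coins never interfere with domination, and (iii) the interpretation of push vs.\ pull is consistent between the two processes. Making this coupling precise, and verifying that it remains valid across successive blocks (so that the inductive domination carries through), is the technical heart of the proof.
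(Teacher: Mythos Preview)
Your first two steps are fine and match the paper's setup. The gap is in the chain-depth analysis and, consequently, in the coupling.

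Your definition of chain depth only tracks dependencies through the \emph{caller} vertex: you count sequences $j_1<\dots<j_\ell$ where step $j_m$ informs the vertex $B_{j_{m+1}}$ \emph{selected} at step $j_{m+1}$, and you exploit that $B_{j_{m+1}}$ is uniform to get the $n^{-(k-1)}$ factor. But the coupling you need must also respect dependencies through the \emph{contacted} vertex $W_j$. Concretely, suppose within a block step $j_1$ pushes the rumour to $W_{j_1}=v$, and later step $j_2$ has $B_{j_2}=u$ uninformed and $W_{j_2}=v$, so $u$ pulls from $v$. In one synchronous round you can couple $B_{j_1}\!\to\! v$ and $u\!\to\! v$ simultaneously, but then $u$ does \emph{not} get informed (since $v$ is uninformed at the start of the round). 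So this is a genuine length-2 informing chain, yet your $d_i$ does not count it: step $j_1$ informs $v$, not $B_{j_2}=u$.

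If you repair the definition of $d_i$ to include such $W$-chains, the bound $\p{d_i\ge k}\le\binom{B}{k}n^{-(k-1)}$ collapses, because $W_j$ is \emph{not} uniform: $\pi(v)=\p{W_j=v}=\frac1n\sum_{u\sim v}\deg(u)^{-1}$ can be $\Theta(1)$ (take $v$ the centre of a star). A single such vertex makes long $W$-chains likely inside every block. This is exactly why the paper introduces \emph{special} vertices (those with $\pi(v)>n^{\alpha-1}$), of which there are fewer than $n^{1-\alpha}$, and treats them separately: blocks are ended whenever a non-special $W$-repeat occurs, and an extra split is made each time a special vertex is first informed. The additive $n^{1-\alpha}$ in the theorem is precisely the cost of these extra splits; in your argument the $n^{1-\alpha}$ appears only as a concentration slack, which is not the role it actually plays. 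Without some device to neutralise high-$\pi$ vertices, the third step cannot be carried out.
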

\begin{corollary}
\label{cor:ratio}
We have 
$$  
\frac{\gstspp(G)}{\gstapp(G)} = \Omega(1 / \log n)
\quad
\mathrm{and}
\quad
\frac{\gstspp(G)}{\gstapp(G)}
= O\big(n^{2/3}\big)\:,$$
and the left-hand {bound is asymptotically} best possible, up to the constant factor. 
 Moreover, there exist {infinitely many} graphs for which this ratio is
$\Omega \left(n^{1/3}(\log n)^{-4/3} \right)$.
\end{corollary}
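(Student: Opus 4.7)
The corollary decomposes into three claims, and the plan is to address them in turn.

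The lower bound $\gstspp(G)/\gstapp(G) = \Omega(1/\log n)$ is immediate from Theorem~\ref{thm:allring}: rearranging the inequality $\gstapp(G) \le 8\, \gstspp(G) \log n$ gives $\gstspp(G)/\gstapp(G) \ge 1/(8\log n)$. Its asymptotic tightness is exactly the tightness assertion in Theorem~\ref{thm:allring}, so the same extremal family witnesses tightness here, with no further construction required.

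For the upper bound $\gstspp(G)/\gstapp(G) = O(n^{2/3})$, I would apply Theorem~\ref{thm:last} with the specific choice $\alpha = 1/3$, which yields
\[
\gstspp(G) \le n^{2/3} + 64\, \gstapp(G)\, n^{2/3}\:.
\]
This value of $\alpha$ is chosen to balance the two exponents in Theorem~\ref{thm:last}, since $1 - \alpha = (1+\alpha)/2$ iff $\alpha = 1/3$. By~\eq{guaranteedextremal}, for large $n$ we have $\gstapp(G) \ge \tfrac15 \log n \ge 1$, hence $n^{2/3} \le \gstapp(G)\, n^{2/3}$, and dividing by $\gstapp(G)$ yields $\gstspp(G)/\gstapp(G) \le 65\, n^{2/3}$. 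Thus this part is a one-line deduction once $\alpha$ is chosen.

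The substantive remaining task is exhibiting a family of graphs whose ratio is $\Omega(n^{1/3}\log^{-4/3} n)$. The plan is to produce $G$ with $\gstapp(G) = O(\log n)$ and $\gstspp(G) = \Omega(n^{1/3}\log^{-1/3} n)$, so that the ratio meets the target. The guiding intuition is that asynchronous spreading benefits from continuous-time parallelism: many independent exponential clocks fire simultaneously and the slowest pull attempt is still a max of i.i.d.\ exponentials, spreading a rumour through a high-degree ``bottleneck'' set in $O(\log n)$ time. By contrast, synchronous spreading is vulnerable to a coupon-collector obstruction: a vertex of high degree can push to only one neighbour per round, and when its neighbours also have high degree they pull only rarely, forcing a polynomial number of rounds. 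A natural template is to take a small expander on $\Theta(n^{1/3})$ vertices and attach to each of its vertices a gadget of $\Theta(n^{2/3})$ auxiliary vertices whose interconnection is tuned so that synchronous pull rates are $\Theta(n^{-1/3})$ per round, while the asynchronous analysis is unaffected at the polylog scale. The main obstacle is carrying out the matching lower bound on $\gstspp$: one must identify a set of $\Theta(n^{1/3}\log^{-O(1)} n)$ vertices each informed only through a small-probability event per round, and use concentration (Chernoff or direct coupon-collector tail estimates) to ensure that with probability at least $1/n$ some such vertex remains uninformed up to the claimed time, while simultaneously coupling asynchronous pulls with i.i.d.\ exponentials to conclude $\gstapp(G) = O(\log n)$.
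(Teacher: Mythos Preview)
Your treatment of the first two parts matches the paper: the lower bound and its tightness come straight from Theorem~\ref{thm:allring} (the star is the extremal example), and the upper bound is Theorem~\ref{thm:last} with $\alpha=1/3$.

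Where you diverge is in the third part. You propose building a new family from scratch (an expander with high-degree gadgets) and then arguing a direct lower bound on $\gstspp$ via a coupon-collector obstruction. This is both harder than necessary and, as you present it, not actually carried out: you identify the ``main obstacle'' but do not resolve it, and the construction sketch is too vague to verify. The paper avoids all of this. It reuses the string-of-diamonds graph $\mathcal S_{m,k}$ from Section~\ref{sec:chain}, where the \emph{average} spread times were already computed: with $k=\Theta((n/\log n)^{2/3})$ and $m=\Theta(n^{1/3}(\log n)^{2/3})$ one has $\wastapp(G)=O(\log n)$ and $\wastspp(G)=\Omega(n^{1/3}(\log n)^{2/3})$. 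Then the general inequalities \eq{expectedguaranteed} and \eq{expectedguaranteeds} convert these to guaranteed spread times: $\gstapp(G)\le e\,\wastapp(G)\log n=O(\log^2 n)$ and $\gstspp(G)\ge(1-1/n)\wastspp(G)=\Omega(n^{1/3}(\log n)^{2/3})$, giving the ratio $\Omega(n^{1/3}(\log n)^{-4/3})$ immediately.

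The point you are missing is that the synchronous lower bound is trivial here, since $\wastspp(G)\ge\diam(G)=2m$, and the passage from average to guaranteed is a one-line application of \eq{expectedguaranteed} and \eq{expectedguaranteeds}. No new construction, no coupon-collector argument, and no direct tail bound on $\sts$ is needed.
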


\begin{open}
\label{open}
What is the maximum possible value
of the ratio $\gstspp(G)/\gstapp(G)$
for an $n$-vertex graph $G$?
\end{open}

We make the following conjecture.

\begin{conjecture}
\label{conj}
For any $n$-vertex graph $G$ we have 
$$ \frac{\gstspp(G)}{\gstapp(G)}
= O\left(\sqrt n\, (\log n)^{O(1)}\right) \:,$$
and this is tight for infinitely many graphs.
\end{conjecture}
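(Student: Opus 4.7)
The plan is to interpolate between two regimes, using Theorem~\ref{thm:last} when $\gstapp(G)$ is relatively large and a new argument when it is small. Set $T := \gstapp(G)$. Plugging $\alpha=0$ into Theorem~\ref{thm:last} yields $\gstspp(G) \le n + 64 T\sqrt n$, hence
$$\gstspp(G)/T \le n/T + 64\sqrt n \:,$$
which is already $O(\sqrt n)$ whenever $T \ge \sqrt n$. So the main task is to prove that $T < \sqrt n$ implies $\gstspp(G) = O\big(\sqrt n \cdot \operatorname{polylog}(n)\big)$.

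For this regime, I would run the asynchronous protocol from $v$ and record its spreading tree $\mathcal T$: the edge through which each informed vertex first learned the rumour. Conditional on $\sta(G,v) \le T$, which holds with probability $1-1/n$, every vertex of $G$ belongs to $\mathcal T$. Letting $C(u)$ denote the set of children of $u$ in $\mathcal T$, we have $\sum_u |C(u)| = n-1$. I would then try to realise the same tree inside the synchronous protocol by pull moves: once $u$ is informed, each $c \in C(u)$ succeeds in pulling from $u$ in a given round with probability $1/\deg(c)$, so $c$ is informed within $O(\deg(c)\log n)$ rounds after $u$ with probability $1-n^{-2}$. Summing along any root-to-leaf path in $\mathcal T$ gives a synchronous upper bound of order $\log n \cdot \sum_{u \text{ on path}} \deg(u)$.

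The key inequality that would complete the argument is then: with high probability (over $\mathcal T$) every root-to-leaf path in $\mathcal T$ has degree-sum at most $\sqrt n \cdot T \cdot \operatorname{polylog}(n)$. The heuristic reason to hope for this is that a neighbour of $u$ is pushed to at Poisson rate $1/\deg(u)$, so edges incident to very high-degree vertices are ``used'' slowly; a path of asynchronous length at most $T$ therefore should not accumulate degree-sum much larger than $T$ times a power of $n^{1/2}$, unless the graph is structurally highly imbalanced.

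The main obstacle is precisely proving this degree-sum inequality. The first-passage structure on $\mathcal T$ depends intricately on the interplay between Poisson ring times and degree-biased neighbour selection, and one must rule out adversarial graphs that concentrate high-degree vertices along a single geodesic of $\mathcal T$. The known lower bound of $\Omega\big(n^{1/3}(\log n)^{-4/3}\big)$ for $\gstspp/\gstapp$ already exploits such imbalances, so any tight proof must carefully characterise which degree patterns the asynchronous protocol can and cannot traverse within time $T$. I suspect bridging this gap ultimately requires a new isoperimetric-type inequality relating depth-$T$ asynchronous spreading trees to global edge-expansion properties of $G$, and it is this step, rather than the coupling in Theorem~\ref{thm:last}, which appears to be the genuine bottleneck in going from $n^{2/3}$ to $\sqrt n$.
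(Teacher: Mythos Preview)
The statement you are addressing is labelled \textbf{Conjecture} in the paper, not a theorem; the authors do not prove it, and Open Problem~\ref{open} explicitly asks for the true maximum of $\gstspp(G)/\gstapp(G)$. So there is no ``paper's own proof'' to compare against, and your proposal should be read as an attack on an open problem rather than a reconstruction of an existing argument.

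Your first reduction is correct and worth recording: setting $\alpha=0$ in Theorem~\ref{thm:last} gives $\gstspp(G)\le n+64\,T\sqrt n$ with $T=\gstapp(G)$, so $\gstspp(G)/T=O(\sqrt n)$ whenever $T\ge\sqrt n$. This cleanly isolates the regime $T<\sqrt n$ as the only remaining case.

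However, the ``key inequality'' you would need in that regime --- that every root-to-leaf path in the asynchronous spreading tree $\mathcal T$ has degree-sum $O\big(\sqrt n\,T\,\mathrm{polylog}(n)\big)$ --- is not merely hard to prove: it is \emph{false}. Take the string of diamonds $\mathcal S_{m,k}$ from Section~\ref{sec:chain} with $k=\Theta\big((n/\log n)^{2/3}\big)$ and $m=\Theta\big(n^{1/3}(\log n)^{2/3}\big)$, so that $T=O(\log n)<\sqrt n$. Any root-to-leaf path in $\mathcal T$ that reaches the far hub passes through all $m$ hubs, each of degree $\Theta(k)$, giving degree-sum $\Theta(mk)=\Theta(n)$, while your claimed bound would be $\sqrt n\cdot\mathrm{polylog}(n)$. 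The flaw is that your coupling only exploits pull moves of the child $c$, costing $O(\deg(c)\log n)$ per step; using push as well, one gets the sharper per-edge cost $O\big(\min(\deg(u),\deg(v))\log n\big)$ (compare Observation~\ref{obs:comtime}), which for the string of diamonds collapses to $O(1)$ per edge and recovers the correct $O(m\log n)$ bound. So the quantity you should be trying to control is $\sum_i \min\big(\deg(v_{i-1}),\deg(v_i)\big)$ along $\mathcal T$-paths, not the raw degree-sum. Whether \emph{that} quantity admits an $O\big(\sqrt n\,T\,\mathrm{polylog}(n)\big)$ bound is, as far as I can see, genuinely open --- and note that even the tightness half of the conjecture is unsupported by any example in the paper, since the string of diamonds only achieves ratio $n^{1/3+o(1)}$.
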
 
 
The parameters $\wastspp(G)$ and $\wastapp(G)$ can be approximated easily using the Monte Carlo method: simulate the protocols several times, measuring the spread time of each simulation, and output the average.
Another open problem is to design a \emph{deterministic} approximation algorithm for any one of 
$\wastapp(G)$, $\gstapp(G)$, $\wastspp(G)$ or $\gstspp(G)$.

For the proofs we use standard graph theoretic arguments and well known properties of the exponential distribution and Poisson processes,
in particular the memorylessness, and the fact that the union of two Poisson processes is another Poisson process.
For proving Theorem~\ref{thm:last} we define a careful coupling between the synchronous and asynchronous protocols.

Previous work on the asynchronous push\&pull protocol has focused on special graphs.
This paper is the first systematic study of this protocol on all graphs.
We believe this protocol is fascinating and is quite different from its synchronous variant, in the sense that different techniques are required for analyzing it, and the spread times of the two versions can be quite different. 
Our work makes significant progress on better understanding of this protocol,
and will hopefully inspire further research on this problem.

A collection of known and new bounds for the average spread times of many graph classes is given in Table~\ref{table}.
In Section~\ref{sec:prelim} we prove some preliminary results and study some examples, which demonstrate tightness of some of the above bounds.
Theorems~\ref{thm:extremal}
and~\ref{thm:extremalsynchronous} are proved in Section~\ref{sec:thm:extremal}.
Theorems~\ref{thm:allring} and~\ref{thm:last} and
Corollary~\ref{cor:ratio}
are proved in Section~\ref{sec:cor:ratio}.

\begin{table}
\begin{center}
\renewcommand{\arraystretch}{1.2}
\renewcommand{\tabcolsep}{0.1cm}
\begin{tabular}{c || c | c}
Graph $G$ & $\wastspp(G)$ & $\wastapp(G)$ \\ \hline\hline
Path & $(4/3)n + O(1)$ & $n + O(1)$ \\  \hline

Star & $2$ & $\log n + O(1)$\\ \hline

Complete & $\sim \log_3 n$ & $\log n + o(1)$\\
&  \cite{KSSV00} & \cite{asynchronous_complete} \\ \hline

String of diamonds $\mathcal{S}_{m,k}$ & $\boldsymbol{\Omega(m)}$& $\boldsymbol{O(\log n+m/\sqrt k)}$\\ 
(see Section~\ref{sec:chain}) & & \\
\hline

Hypercube & $\Theta(\log n)$& $\Theta(\log n)$\\
& \cite{FPRU90} & \cite{pemantle}\\ \hline

Random graphs $\mathcal{G}(n,p)$ & $\Theta(\log n)$ & $\sim \log n$ \\
$1<\frac{np}{\log n}$ fixed & \cite{FPRU90} & \cite{speidel}\\\hline

Random $d$-regular graphs & $\Theta(\log n)$ &  $\sim (\log n)(d-1)/(d-2)$\\ 
$2 < d$ fixed & \cite{FP10}& \cite{dregular_asynch}\\ \hline

Preferential attachment graphs &$\Theta(\log n)$&$\Theta(\log n)$\\ 
(Barab\'{a}si-Albert model)&\cite{DFF11}&\cite{DFF12}\\ \hline

{\small Random geometric graphs in} & $\Theta(\sqrt[d]n/r + \log n)$ & $\boldsymbol{O\left({\log n}{\cdot}{\sqrt[d]n/r}+{\log^2 n}\right)}$ \\
{\small $\left[0,\sqrt[d]n\right]^d$ with edge threshold $r$}& \cite{rgg}& \\
{\small  above giant component threshold}&&\\\hline

Random $k$-trees &$\Omega\left(n^{1/(k+3)}\right)$&$\Omega\left(n^{1/(k+3)}\right)$\\ 
($2\leq k$ fixed)&\cite{we_ktrees}&\cite{we_ktrees}\\ \hline

General&
$O\left(\Delta(G)(\diam(G)+\log n)\right)$ & $O\left(\Delta(G)(\diam(G)+\log n)\right)$\\
& \cite{FPRU90} & \cite{FPRU90} \\ \hline

General&
$O\left((\log n) / \Phi(G)\right)$ & $\boldsymbol{O\left((\log^2 n) / \Phi(G)\right)}$\\
& \cite{Gia11} &  \\ \hline

General&
$O\left( (\log \Delta(G) \cdot \log n) / \alpha(G) \right)$ & $\boldsymbol{O\left((\log \Delta(G) \cdot \log^2 n) / \alpha(G)\right)}$\\
& \cite{G13} &  
\end{tabular}
\caption{Average spread times of some graph classes are shown.
For many of the entries, the relevant paper indeed proves an asymptotically almost sure bound for the spread time.
Bold entries are new to this paper.
Results for random graph classes hold \emph{asymptotically almost surely} as the number of vertices grows.
The notation $\sim$ means equality up to a $1+o(1)$ factor.
$\Delta(G)$ denotes the maximum degree of $G$.
For $S\subseteq V(G)$, let $\partial S$ be  the set of vertices in $V(G)\setminus S$ that have a neighbour in $S$, and let 
${ e}(S, V(G)\setminus S)$ be the number of edges between $S$ and $V(G)\setminus S$,
and let ${\tt vol}(S)=\sum_{u\in S}\deg(u)$.
Then $\alpha(G) :=\min\left\{\frac{|\partial S|}{|S|} :  S\subseteq V(G), 0<|S|\leq |V(G)|/2  \right\}$ and
$\Phi(G) :=\min\left\{\frac{e(S, V(G)\setminus S)}{{\tt vol}(S)} : S\subseteq V(G),
0 < {\tt vol}(S) \le {\tt vol}(V(G)) / 2 \right\}$.}
\label{table}
\end{center}
\end{table}

\section{Preliminaries and examples}
\label{sec:prelim}
Let us denote $\Nz=\{0,1,2,\dots\}$ and $\N=\{1,2,\dots\}$.
Let $\geo(p)$ denote a geometric random variable with parameter $p$ taking values in $\Nz$; namely for every  $k\in\Nz$,
$\p{\geo(p) = k} = (1-p)^kp$.
Let $\erv(\lambda)$ denote an exponential random variable with parameter $\lambda$ and mean $1/\lambda$.
For random variables $X$ and $Y$, $X\samedist Y$ means $X$ and $Y$ have the same distribution.
All logarithms are natural.
For functions $f$ and $g$,
$f \sim g$ means $\lim f(n)/g(n) = 1$ as $n$ grows.

We start by making a few observations valid of all graphs.

\begin{observation}
\label{obs:thinning}
Consider the asynchronous variant.
Let $uv$ be an edge.
Whenever $v$'s clock rings, it calls $u$ with probability $1/\deg(v)$.
Hence, for each vertex $v$, we can replace $v$'s clock by one exponential clock for each incident edge, these clocks being independent of all other clocks and having rate $1/\deg(v)$.
\end{observation}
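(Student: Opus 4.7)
The plan is to invoke the standard thinning (splitting) property of Poisson processes. Fix a vertex $v$ of degree $d = \deg(v)$, and let $(T_i)_{i \ge 1}$ be the ring times of $v$'s rate-1 exponential clock, so $(T_i)$ is a homogeneous Poisson point process on $[0,\infty)$ of rate $1$. The protocol specifies that at each ring time $T_i$, the vertex $v$ independently chooses a neighbour $Y_i$ uniformly at random among its $d$ neighbours, independently of the ring times and of all other random choices in the system.

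For each neighbour $u$ of $v$, I would then define the marked point process $N_{v \to u} = \{ T_i : Y_i = u \}$ consisting of those ring times at which $v$ decides to call $u$. Because the marking $(Y_i)$ is i.i.d.\ uniform on the $d$ neighbours and independent of $(T_i)$, the classical thinning theorem for Poisson processes yields that the $d$ processes $\{N_{v \to u}\}_{u \sim v}$ are mutually independent Poisson point processes on $[0,\infty)$, each of rate $1/d$, and their (disjoint) union recovers $(T_i)$. Since each original ``ring-and-call-$u$'' event coincides exactly with a point of $N_{v \to u}$, replacing $v$'s clock by $d$ independent rate-$1/d$ clocks indexed by its incident edges preserves the joint law of all (vertex, called-neighbour) events.

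Finally, I would note that this construction can be carried out independently at every vertex, since each vertex's ring-and-choice process is independent of those at the other vertices by assumption. The resulting system of $2|E(G)|$ independent exponential clocks (one rate-$1/\deg(v)$ clock for the ordered pair $(v,u)$ at each edge $uv$) drives exactly the same protocol in distribution. No step here is a genuine obstacle; the only point that requires a moment of care is keeping track that the neighbour choices are independent across rings and across vertices, which is precisely the hypothesis of the thinning theorem.
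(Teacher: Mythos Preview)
Your argument is correct. The paper states this as an Observation with no proof; your invocation of the Poisson thinning (splitting) theorem is exactly the standard justification the authors are implicitly relying on, so there is nothing to compare.
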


\begin{observation}
\label{obs:restart}
Whenever a new vertex is informed, by memorylessness of the exponential random variable, we may imagine that all clocks are restarted.
\end{observation}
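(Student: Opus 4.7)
The plan is to invoke the strong Markov property of independent Poisson processes, which is the natural sharpening of the memoryless property from deterministic times to stopping times. First, I would set up notation: for each vertex $v \in V(G)$, let $N_v$ denote the rate-$1$ Poisson process of clock rings associated with $v$, and let $(\mathcal{F}_t)_{t \geq 0}$ be the filtration generated by the family $\{N_v(s) : v \in V(G),\, s \leq t\}$ together with the uniform neighbour-selection choices made by each vertex at its ring times in $[0,t]$. With this filtration, the set of informed vertices at time $t$ is $\mathcal{F}_t$-measurable, since it is determined by the initial vertex together with the history of rings and neighbour choices on $[0,t]$.

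Next, I would observe that for any fixed vertex $u$, the time $\tau$ at which $u$ first becomes informed is a stopping time for $(\mathcal{F}_t)$, because $\{\tau \le t\}$ equals the event that $u$ lies in the (explicit) informed set at time $t$. The strong Markov property applied to the joint driver $(N_v)_{v \in V(G)}$ then yields that, conditionally on $\mathcal{F}_\tau$, the shifted processes $s \mapsto N_v(\tau + s) - N_v(\tau)$ for $v \in V(G)$ are again independent rate-$1$ Poisson processes, jointly independent of $\mathcal{F}_\tau$. The uniform neighbour choices attached to post-$\tau$ rings are independent of $\mathcal{F}_\tau$ by construction.

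Since the evolution of the protocol after time $\tau$ depends only on these post-$\tau$ ring times and post-$\tau$ neighbour choices together with the informed set at time $\tau$, the conditional law of the process after $\tau$ given $\mathcal{F}_\tau$ coincides with the law of the protocol started afresh from the informed set at time $\tau$ with all exponential clocks reset. This is precisely the assertion that one may imagine all clocks to be restarted. The argument is essentially a repackaging of the strong Markov property for Poisson drivers, so there is no genuine obstacle; the only minor point to verify is that $\tau$ is a stopping time, which is immediate from the $\mathcal{F}_t$-measurability of the informed set.
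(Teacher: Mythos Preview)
Your proposal is correct and is essentially a rigorous elaboration of the paper's approach: the paper states this as a bare observation with no proof beyond the phrase ``by memorylessness of the exponential random variable.'' Your write-up sharpens this to the strong Markov property, which is indeed the technically correct tool here since the time a new vertex becomes informed is a stopping time rather than a deterministic time; this is a welcome clarification but not a different route.
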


The following definition will be used throughout.

\begin{definition}[Communication time]
For an edge $e=uv$, the \emph{communication time via edge $e$}, written $T(e)$, is defined as follows.
Suppose $\tau$ is the first time that one of $u$ and $v$ learns the rumour, 
and $\rho$ is the first time after $\tau$ that one of $u$ and $v$ calls the other one. 
Then $T(e) = \rho - \tau$, which is nonnegative.
Note that {after} time $\rho$, both $u$ and $v$ know the rumour.
\end{definition}

\begin{observation}
\label{obs:comtime}
Let $uv \in E(G)$.
In the synchronous version,
$$T(uv) \samedist
1 + \min \{\geo(1/\deg(u)),\geo(1/\deg(v))\}
\:.
$$
\end{observation}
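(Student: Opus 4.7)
The plan is to condition on the full history of the protocol up to the first time $\tau$ at which one of $u,v$ becomes informed (and on which of the two it is, possibly both). By the specification of the synchronous protocol, in every round each vertex independently and uniformly picks a neighbour to call, and the choices at different vertices and in different rounds are mutually independent. Therefore, conditional on the history up to time $\tau$, the neighbour-selections made by $u$ and $v$ in rounds $\tau+1,\tau+2,\dots$ are i.i.d.\ copies of the uniform-neighbour distribution and independent of the past; in particular, the conditional distribution of $T(uv)=\rho-\tau$ does not depend on $\tau$ or on the history.

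Fix such a conditioning. I let $X$ be the first round $t>\tau$ in which $u$ calls $v$, and $Y$ the first round $t>\tau$ in which $v$ calls $u$. In each round $t\ge\tau+1$, independently of everything else, the event $\{u\text{ calls }v\}$ has probability $1/\deg(u)$, so the number of failures preceding the first success, namely $X-\tau-1$, is distributed as $\geo(1/\deg(u))$. Symmetrically, $Y-\tau-1\sim\geo(1/\deg(v))$; and since in every round $u$'s and $v$'s choices are independent, the two geometric variables $X-\tau-1$ and $Y-\tau-1$ are independent. Since $\rho=\min(X,Y)$,
$$T(uv)=\rho-\tau=1+\min(X-\tau-1,\,Y-\tau-1)\samedist 1+\min\{\geo(1/\deg(u)),\,\geo(1/\deg(v))\},$$
which is the claimed identity.

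There is essentially no obstacle: the statement reduces to the standard fact that the waiting time until the first success in two independent Bernoulli streams is governed by independent geometric variables. The only mild subtlety is that $u$ and $v$ might happen to become informed simultaneously at time $\tau$ (from other neighbours), in which case any subsequent call between them transmits no new information; but the definition of $\rho$ still records the first round in which one calls the other, so the above distributional computation is unaffected.
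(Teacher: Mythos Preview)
Your argument is correct. The paper states this as an Observation without any proof, so there is nothing to compare against; your write-up is precisely the natural justification the paper omits. The one point worth being careful about---that $\tau$ is a stopping time possibly depending on the round-$\tau$ choices of $u$ and $v$ themselves---you handle correctly by noting that only the choices in rounds $\tau+1,\tau+2,\dots$ matter for $\rho$, and these are independent of the past.
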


Using
Observations~\ref{obs:thinning} and~\ref{obs:restart},
we obtain a nicer formula for the asynchronous version.

\begin{proposition}
\label{pro:comtime}
Let $uv \in E(G)$.
In the asynchronous version,
\begin{equation}
\label{eqtuv}
T(uv) \samedist
\erv(1/\deg(u) + 1/\deg(v))\:.
\end{equation}
Moreover, the random variables $\{T_e\}_{e\in E(G)}$ are mutually independent.
\end{proposition}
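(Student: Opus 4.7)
My plan is to start from Observation~\ref{obs:thinning}, which lets me replace, at each vertex $v$, its single clock by a collection of independent rate-$1/\deg(v)$ exponential clocks, one per incident edge. For each undirected edge $e = uv$, merging the two directed edge-clocks by Poisson superposition yields a Poisson process $\mathcal{P}_e$ of rate $\lambda_e := 1/\deg(u) + 1/\deg(v)$, and the processes $\mathcal{P}_e$ indexed by distinct edges are mutually independent, since they are built from disjoint families of independent directed clocks. The push\&pull dynamics depend on the two directed clocks of $e$ only through $\mathcal{P}_e$, because any ring on $e$ triggers communication across $e$ regardless of which endpoint initiated the call.

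For the marginal claim~\eqref{eqtuv}, I would use that $\tau_e$ is a stopping time for the natural filtration, together with the strong Markov property of the Poisson process $\mathcal{P}_e$: the time elapsed from $\tau_e$ to its next ring, namely $T_e$, is $\erv(\lambda_e)$ and independent of the history up to $\tau_e$.

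The harder part is joint independence, because $\tau_e$ itself depends on the Poisson processes of other edges, so it is not immediate that $T_e$ and $T_{e'}$ are independent when $e \ne e'$. My plan is to handle this by a coupling with a ``first-passage-percolation'' process: sample i.i.d.\ $X_e \sim \erv(\lambda_e)$ up front, declare $v_0$ informed at time $0$, and let each edge $e$ fire at time $\tau_e + X_e$, informing the other endpoint if it is still uninformed. I would then verify that this alternative process and the push\&pull process have the same law, by arguing that both are Markov chains on the same state space (the informed set together with the activation times of currently active edges) with identical transition kernels. Indeed, at any time $t$, conditional on the history, for each currently active edge $e$ the remaining time to its firing is $\erv(\lambda_e)$, jointly independent across such edges --- in the alternative process by memorylessness of $X_e$ given $\{X_e > t - \tau_e\}$, and in the push\&pull process by memorylessness of $\mathcal{P}_e$ beyond any stopping time. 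A short induction on the sequence of firing events then yields equality in distribution of the two processes, and since the $X_e$'s are i.i.d.\ by construction and equal the communication times in the alternative process, the $T_e$'s must be mutually independent as well.

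The main obstacle is precisely this last step: the marginal formula is routine via memorylessness, but the joint independence is delicate because the activation times tangle together the randomness from different edges, and the cleanest resolution I see is the coupling-with-FPP argument described above.
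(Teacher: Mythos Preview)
Your proposal is correct and follows essentially the same approach as the paper: both start from Observation~\ref{obs:thinning} (thinning into independent per-edge clocks, then superposing the two directed clocks on each edge) and then invoke memorylessness. The paper's proof is a two-liner that simply cites Observations~\ref{obs:thinning} and~\ref{obs:restart} for both the marginal and the joint independence; your FPP-coupling argument is a more explicit way of unpacking exactly what Observation~\ref{obs:restart} (``restart all clocks whenever a new vertex is informed'') is doing, so you are spelling out rigorously what the paper leaves as a one-line appeal.
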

\begin{proof}
By Observations~\ref{obs:thinning} and~\ref{obs:restart}, the $T(e)$'s are mutually independent, and moreover, $T(uv)$ is the minimum of two independent exponential random variables with rates $1/\deg(v)$ and  $1/\deg(u)$.
\end{proof}

We next prove that changing the starting vertex affects the spread time by at most a multiplicative factor of 2.

\begin{proposition}
\label{pro:start}
For any two vertices $u$ and $v$ of $G$ we have
$\sts(G,u) \stochle 2 \sts(G,v)$
and also
$\sta(G,u) \stochle 2 \sta(G,v)$.
\end{proposition}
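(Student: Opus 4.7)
My plan is to decompose the spread time from $u$ into the time until $v$ is first informed plus the remaining time, and bound each piece. Specifically, I would let $\tau$ be the first time $v$ is informed in a run starting at $u$. By Observation~\ref{obs:restart} (the restart property), conditional on the informed set $S$ at time $\tau$, the remaining process is a fresh run of the protocol starting from $S$; since $S \ni v$, monotonicity of the spread with respect to the initial informed set gives that this remainder is stochastically dominated by $\sta(G,v)$ (or $\sts(G,v)$). This yields $\sta(G,u) \stochle \tau + \sta(G,v)$ and the analogue for $\sts$. To conclude the factor of $2$, it suffices to show $\tau \stochle \sta(G,v)$ (resp.\ $\sts(G,v)$) in a coupling that allows pointwise addition of the two pieces.

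For the asynchronous case, I would exploit Proposition~\ref{pro:comtime} to realize the process as first-passage percolation. Let $T_e \sim \erv(1/\deg(x)+1/\deg(y))$ independently for each edge $e=xy$, and define $d_T(x,w) := \min_{\pi: x \to w} \sum_{e \in \pi} T_e$. Then the time at which $w$ learns the rumour from source $x$ equals $d_T(x,w)$, so $\sta(G,x) = \max_w d_T(x,w)$. Couple the two runs (from $u$ and from $v$) by using the same realization of $\{T_e\}$. The triangle inequality together with the symmetry $d_T(u,v) = d_T(v,u) \le \sta(G,v)$ then gives
\[
\sta(G,u) = \max_w d_T(u,w) \le d_T(u,v) + \max_w d_T(v,w) \le 2\sta(G,v)
\]
pointwise in this coupling, hence $\sta(G,u) \stochle 2\sta(G,v)$.

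For the synchronous case, the same decomposition $\sts(G,u) \stochle \tau + \sts(G,v)$ is valid, but the clean FPP coupling is unavailable: by the calling discipline, each vertex issues exactly one call per round, so the communication times on edges sharing a vertex are dependent. I would instead try to couple the runs from $u$ and from $v$ by using common random choices $\{n_t(w)\}$ and exploit the symmetry of the underlying call events (a call occurs across edge $xy$ at round $t$ iff $n_t(x)=y$ or $n_t(y)=x$, which is symmetric in $x$ and $y$) to argue that $\sts(G,u) \le 2\sts(G,v)$ holds in this coupling, either pointwise or at least stochastically. A valid alternative would be to run one of the two processes and use the call-event description to check that the first-passage time $u\to v$ in the $u$-run has the same distribution as the first-passage time $v\to u$ in a $v$-run (which is trivially at most $\sts(G,v)$).

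The main obstacle is the synchronous case. Stochastic domination, rather than a mere expectation bound, is what is claimed, so an honest coupling is required; and the lack of independence between edge communication times prevents the slick FPP argument from transferring directly. Getting the common-random-choices coupling to yield the required pointwise (or stochastic) bound is where the care is needed, since the synchronous process is not ``reversible'' as cleanly as FPP.
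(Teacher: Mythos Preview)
Your asynchronous argument is correct and is essentially the paper's approach: both rely on the first-passage-percolation representation of Proposition~\ref{pro:comtime}. Your version is in fact a little cleaner, since the pointwise triangle-inequality bound $\sta(G,u)=\max_w d_T(u,w)\le d_T(u,v)+\max_w d_T(v,w)\le 2\sta(G,v)$ in the common-weights coupling yields the stochastic domination directly, whereas the paper records only $D(u,v)\samedist D(v,u)$ and then feeds this into the same restart decomposition used for the synchronous case.

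For the synchronous case, your ``second alternative'' is exactly what the paper does, but you are missing the mechanism that proves $C(u,v)\samedist C(v,u)$. The idea is a \emph{time reversal of signatures}. Encode round $i$ by the map $a_i:V\to V$ sending each vertex to the neighbour it calls; then $k$ rounds correspond to a uniformly random sequence $(a_1,\ldots,a_k)$, and one writes $I(u,a_1\cdots a_k)$ for the informed set. The key combinatorial observation is that if $v\in I(u,a_1\cdots a_k)$, then following the informing $(u,v)$-path backwards shows $u\in I(v,a_k a_{k-1}\cdots a_1)$: a call across edge $xy$ in round $i$ is the symmetric event $\{a_i(x)=y\text{ or }a_i(y)=x\}$, so each step of the path works equally well in the reversed order with the roles of source and target swapped. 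Since the involution $(a_1,\ldots,a_k)\mapsto(a_k,\ldots,a_1)$ preserves the uniform measure, this gives $\p{C(u,v)\le k}=\p{C(v,u)\le k}$ for every $k$. Your instinct to share the random choices $\{n_t(w)\}$ between the two runs is on the right track, but it is the \emph{reversal} of the signature sequence, not merely reusing it, that produces the symmetry. With $C(u,v)\samedist C(v,u)$ in hand, the paper concludes via the chain $\sts(G,u)\stochle C(u,v)+\sts(G,v)\samedist C(v,u)+\sts(G,v)\stochle 2\sts(G,v)$, where the last step uses $C(v,u)\le\sts(G,v)$ on a common $v$-run.
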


\begin{proof}
We first consider the synchronous protocol.
Let $C(u,v)$ denote the first time that $v$ learns the rumour, assuming initially only $u$ knows it.
We claim that
\begin{equation}
C(u,v) \samedist C(v,u) \:,
\label{uvvusame}
\end{equation}
which would imply
$$
\sts(G,u)
\stochle 
C(u,v) + \sts(G,v)
\samedist
C(v,u) + \sts(G,v)
\stochle 2 \sts(G,v) \:.$$
In every round of an execution of the protocol, each vertex contacts a neighbour.
We call this an \emph{action}, and the 
\emph{signature} of this action, is a function $a : V \to V$ mapping each vertex to a neighbour.
Hence, $m$ rounds of the protocol can be 
encoded as $(u,a_1a_2\cdots a_m)$,
where $u$ is the vertex knowing the rumour initially, and
$a_1 a_2 \cdots a_m$ is a sequence of signatures.
Let $I(u,a_1 a_2 \cdots a_m)$ denote the set of informed vertices after $m$ rounds.
Note that in each round, the signature of the  {action taken} is a uniformly random one.
Hence $\p{C(u,v) \le k}$ equals the proportion of the signature-sequences $a_1 a_2 \cdots a_k$ of length $k$
that satisfy $v \in I(u,a_1 \cdots a_k)$.
If $v \in I(u,a_1 \cdots a_k)$, then looking at the $(u,v)$-path through which $v$ was informed, we see that $u \in I(v,a_k a_{k-1} \cdots a_2 a_1)$.
Therefore, $\p{C(u,v) \le k} = \p{C(v,u) \le k}$ for any $k$,
and this proves \eqref{uvvusame}.

We now consider the asynchronous protocol.
Let $D(u,v)$ denote the first time that $v$ learns the rumour, assuming initially only $u$ knows it.
Again, it suffices to prove
\begin{equation}
D(u,v) \samedist D(v,u) \:.
\label{uvvusamea}
\end{equation}
By Proposition~\ref{pro:comtime},
for any edge $uv$
we have $T(uv)\samedist \erv(1/\deg(u) + 1/\deg(v))$.
Moreover, the variables $\left\{T(e)\right\}_{e\in E}$
are mutually independent.
We define a collection of mutually independent random variables $\left\{R(e)\right\}_{e\in E}$,
such that for any edge $uv$, 
$$R(uv)\samedist \erv(1/\deg(u) + 1/\deg(v))\:.$$
Let $\mathcal P$ denote the set of all $(u,v)$-paths.
Then we have
$$D(u,v) 
=
\min \left\{ \sum_{e \in P} T(e) : P \in \mathcal P \right\}
\samedist \min \left\{ \sum_{e \in P} R(e) : P \in \mathcal P \right\} \:.$$
By symmetry, $D(v,u)$ has exactly the same distribution,
and \eqref{uvvusamea} follows.
\end{proof}

We next study some important graphs and bound their spread times, partly for showing tightness of some of the bounds obtained, and partly to serve as an introduction to the behaviour of the protocols. 
\subsection{The complete graph}
For the complete graph, $K_n$, by symmetry what matters at any time is not the actual set of informed vertices, but only the number of
vertices that have the rumour.
In the asynchronous case, by Proposition~\ref{pro:comtime}
and Observation~\ref{obs:thinning}, we can imagine a exponential clock for
each edge, having rate $2/(n-1)$ and independent of all other clocks. 
Let $T_1=0$ and denote by $T_k$
the first time that there are $k$ informed vertices. 
We can at this time simply restart all $k(n-k)$ clocks   at edges joining informed to uninformed vertices
(see Observation~\ref{obs:restart}). 
When the next alarm rings, a new vertex
receives the rumour.   
Thus, $T_{k+1}-T_k$ is
distributed as the mimimum of $k(n-k)$ independent exponential random variables each with rate $2/(n-1)$,
i.e.\ as $\erv(2k(n-k)/(n-1))$.
Hence by linearity of expectation,
\[
\wastapp(K_n)=
\e{\sta(K_n,v)}=\e{T_n }
 = \e{T_1} +
 \sum_{k=1}^{n-1} \e{T_{k+1}-T_k} =
 \sum_{k=1}^{n-1}\frac{n-1}{2k(n-k)}\:.
\]
We have
$$
\sum_{k=1}^{n-1}\frac{n-1}{2k(n-k)}
=\left(\frac{n-1}{2n}\right)
\sum_{k=1}^{n-1}\left\{\frac{1}{k}+\frac{1}{n-k}\right\}
\sim 
\left(\frac{n-1}{2n}\right) (2\log n) \sim \log n \:,
$$
so $\wastapp(K_n)\sim\log n$. 
In fact, Janson~\cite[Theorem~1.1(ii)]{asynchronous_complete} showed that a.a.s.\ $\sta(K_n,v) \sim \log n$.
Moreover, by slightly altering his proof we get $\gstapp(K_n)\sim (3/2)\log n$.

For the synchronous version, Karp et~al.~\cite[Theorem~2.1]{KSSV00} showed that $\sts(K_n,v)
\sim \log_3n$ a.a.s.
It follows that $\wastspp(K_n) \sim \log_3 n$. It is implicit in their proof that
$\gstspp(K_n) = O(\log n)$.

\subsection{The star}
\label{sec:star}
The star  $\sg$ with $n$ vertices has $n-1$ leaves and a central vertex that is adjacent to every other vertex.
It is clear that $\sts(\sg,v)=1$ if $v$ is the central vertex and $\sts(\sg,v)=2$ otherwise. 
So we have $\wastspp(\sg)=\gstspp(\sg)=2$.
Below we will show that  
$\wastapp(\sg) \sim \log n$ and $\gstapp(\sg)\sim 2\log n$.
This graph gives that the left-hand bounds in~\eq{expectedguaranteed},~\eq{expectedextremal},~\eq{guaranteedextremal},~\eq{expectedguaranteeds} and Corollary~\ref{cor:ratio}, and Theorem~\ref{thm:allring}, are tight, up to constant factors. 

We now show that $\wastapp(\sg) \sim \log n$.
The intuition is that in the asynchronous case, the spread time is close to the time the last vertex makes its first call.
By Proposition~\ref{pro:comtime},
all communication times are independent and distributed as $\erv(n/(n-1))$. 
Let $X_1,\dots,X_{n-1}$ be independent $\erv(n/(n-1))$ random variables. Then  
\[
\sta(\sg,v) \,{\buildrel d \over =}\, \begin{cases}
\max \{ X_1,\dots,X_{n-1} \} & \text{ if } v \text{ is the central vertex} \\
X_1+\max \{ X_2,\dots,X_{n-1} \} & \text{ if } v \text{ is a leaf.} 
\end{cases}
\]
It follows that $\wastapp(\sg) \sim \log n$ and a.a.s.\ $\sta(\sg,v)\sim \log n $ for any $v$. 

We finally show that $\gstapp(\sg)\sim 2\log n$.
Let $\lambda=n/(n-1)$, $Y=X_1$ and
define $Z=\max\{X_2,\dots,X_{n-1}\}$. Fix $\varepsilon\in (0,1)$. We have
\[
\p{Z\le \log n}= (1-e^{-\lambda \log n})^{n-2} \sim 1/e.
\]
Hence
\begin{align}\label{lower}
\p{Y+Z\ge (2-\varepsilon)\log n} \ge \p{Y\ge (1-\eps)\log n}\, \p{Z \ge \log n} = \Theta(n^{-1+\eps}).
\end{align}
Now let $a=(1+\eps/2)\log n$ and let $A$ be the event $\{Y\le a\}$.
We have $\p{A^c}=O(n^{-1-\eps/2})$. Also
\begin{align}\label{Y+Z}
\p{Y+Z\ge (2+\varepsilon)\log n} &\le \p{A^c}+\int_{0}^{a} \lambda e^{-\lambda y}\, \p{Z>(2+\eps)\log n-y}\mathrm d y		\notag	\\
&=O(n^{-(1+\eps/2)})+ \int_{0}^{a} \lambda e^{-\lambda y}\, \p{Z>(2+\eps)\log n-y} \mathrm d y.
\end{align}
Using the independence of the $X_i$,
\begin{align}\label{Z}
\p{Z>(2+\eps)\log n-y} &= 1- \left(1-e^{-\lambda[(2+\eps)\log n-y]}\right)^{n-2}	\notag	\\
&\le 1- \left(1-e^{-\lambda[(2+\eps)\log n-y]}\right)^{n}	\notag		 \\
&= 1- \left(1-\frac{e^{\lambda y}}{n^{\lambda(2+\eps)}}\right)^{n} \le \frac{e^{\lambda y}}{n^{1+\eps}}.
\end{align}
The last inequality can be justified by expanding the left hand side and using the fact that $\lambda>1$. Using \eqref{Z} in \eqref{Y+Z} we get
\[
\p{Y+Z\ge (2+\varepsilon)\log n} \le O(n^{-(1+\eps/2)})+ \frac{1}{n^{1+\eps}}\int_{0}^{a} \lambda e^{-\lambda y}\,e^{\lambda y}dy =O\left(n^{-(1+\eps/2)}\right).
\]
This equation together with equation~\eqref{lower} implies  $\gstapp(G_n^*)\sim 2\log n$.

\subsection{The path}
\label{s:path}
 
For the path graph $\pg$,
we have
$\wastapp(\pg)\sim n$, which shows that the right-hand bound in~\eq{expectedextremal} is tight, up to the constant factor.
Moreover,
$\gstapp(\pg)\sim n$.
For the synchronous protocol, we have $\wastspp(\pg)= (4/3)n-2$, which shows that the right-hand bound in~\eq{expectedextremals} is tight, up to the constant factor.
Finally, we have $\gstspp(\pg)\sim (4/3)n$.
Detailed calculations follow.

Label the vertices in the path as $(v_1,\dots,v_n)$. 
In this graph, the spread times in the synchronous and asynchronous variants are close to each other. We first consider the asynchronous variant.
Let $e$ be an edge.
By Proposition~\ref{pro:comtime}, if $e$ connects two internal vertices, then 
$T(e)\samedist \erv(1)$,
and otherwise, $T(e)\samedist\erv(3/2)$. 
Thus if the rumour starts from one of the endpoints, say $v_1$, we have
\begin{equation}
\label{pathgraph}
\sta(\pg,v_1)\,{\buildrel d \over =}\, \sum_{i=1}^{n-1} X_i,
\end{equation}
where $X_i$'s are independent exponential random variables, $X_1$ and $X_{n-1}$ with rates $3/2$ and the rest with rates $1$.
It follows that $\e{\sta(\pg,v_1)}= (n-3)+2(2/3)=n-5/3$. With similar computations, it is easy to see that this is the worst case, i.e.\ $\wastapp(\pg)= n-5/3$.

Next we show $\gstapp(\pg)\sim n$.
Fix $\eps>0$.
Note that
$\sum_{i=2}^{n-2} X_i$
is a sum of i.i.d.\ random variables,
hence by Cram\'{e}r's Theorem
(see, e.g.,~\cite[Theorem~5.11.4]{grimmett_book}),
the probability that it deviates by at least $\eps n$ from its expected value is $\exp(-\Omega(n))$.
Moreover,~\eq{pathgraph}
means that $\sta(\pg,v_1)$ is 
$\sum_{i=2}^{n-2} X_i$
plus two exponential random variables with constant rate,
and the same statement is true for it as well, so $\gstapp(\pg)\sim n$.
 
Now consider the synchronous case.
Let $e$ be an edge.
By Observation~\ref{obs:comtime}, if $e$ connects two internal vertices, then 
$T(e)\samedist 1+\geo(3/4)$,
and otherwise, $T(e)=1$. 
Thus if the rumour starts from one of the endpoints, say $v_1$, we have
\begin{equation}
\label{pathgraphs}
\sts(\pg,v_1)\samedist n-1 + \sum_{i=2}^{n-2} X_i,
\end{equation}
where $X_i$'s are independent
$\geo(3/4)$ random variables.
{It follows that $\e{\sts(\pg,v_1)}= (4/3)n-2$. 
With similar computations, it is easy to see that this is the worst case, i.e.\ $\wastspp(\pg)= (4/3)n-2$.}
An argument similar to the one for the asynchronous variant gives $\gstspp(\pg)\sim (4/3) n$.

\subsection{The double star}
\label{sec:twostars}
Consider the tree $DS_n$ consisting of two adjacent vertices of degree $n/2$ and $n-2$ leaves, see Figure~\ref{fig:stringdiamonds}(Left).
Below we will show that $\gstapp(DS_n)$ and $\gstspp(DS_n)$ are both  $\Theta(n \log n)$, while the average times $\wastapp(DS_n)$ and $\wastspp(DS_n)$ are $\Theta(n)$.
This example hence shows   tightness of the right-hand bounds in~\eq{expectedguaranteed},~\eq{guaranteedextremal},~\eq{expectedguaranteeds} and~\eq{guaranteedextremals} {up to constant factors}.
The main delay in spreading the rumour in this graph comes from the edge  joining the two centres. 
The idea is that it takes $\Theta(n)$ units of time on average for this edge to pass the rumour, but to be sure that this has happened with probability $1-1/n$, we need to wait $O(n\log n)$ units of time.
Detailed calculations follow.

Indeed we will show that
$\wastapp(DS_n)$ and $\wastspp(DS_n)$ are asymptotic to $n/4$,
and 
$\gstapp(DS_n)$ and $\gstspp(DS_n)$ are asymptotic to $n\log n / 4$.
First, consider the asynchronous case. Here, by Proposition~\ref{pro:comtime},
$T(e^*)= \erv(4/n)$.
So, the rumour passes from one centre to the other one in $n/4$ time units on average. 
On the other hand, the leaves learn the rumour in $\Theta(\log n)$ time on average, as in the star graph. 
Combining the two, we get $\wastapp(DS_n)\sim n/4$. 

For the guaranteed spread time, note that 
if $c<1/4$ then
$$\p{T(e^*)\ge cn\log n} 
=\exp(-n/4 \times cn \log n)\ge 1/n \:.$$ 
Thus $\gstapp(DS_n) \ge n\log n/4$. 
Straightforward calculations  give that 
if $c>1/4$ then for any vertex $v$,
$\p{\sta(DS_n,v) > cn\log n}<1/n$,
whence $\gstapp(DS_n)\sim (n\log n)/4$. 

In the synchronous case, for any $v$ we have
$T(e^*) + 1 \le \sts(DS_n,v) \le T(e^*) + 2$
and by Observation~\ref{obs:comtime},
$$T(e^*) \samedist
1 + \min \{\geo(2/n),\geo(2/n)\}
\samedist
1 + \geo(4/n - 4/n^2) \:,
$$
hence
$\wastspp(DS_n) = 3 + \e{\geo(4/n - 4/n^2)} \sim n/4$.
If $c<1/4$ then
$$\p{3+\geo(4/n - 4/n^2)\ge cn\log n} 
\ge
(1-4/n +4/n^2)^{cn\log n} \:.$$
Since $e^{-y}\ge 1-y\ge e^{-y-y^2}$ for every $y\in[0,1/4]$,
$$
(1-4/n +4/n^2)^{cn\log n}
=
\exp( (-4/n+O(1/n^2))cn\log n )
=
(1/n)^{4c}
e^{o(1)}
\ge 1/n \:.$$ 
While, if $c>1/4$, then
$$\p{3+\geo\left(\frac 4 n - \frac {4}{n^2}\right)\ge cn\log n} 
=
\left(1-\frac 4 n + \frac {4}{n^2}\right)^{cn\log n-3} \! \! \!
\le
\exp(-4c \log n + o(1))
< 1/n \:,$$ 
whence
$\gstspp(DS_n)\sim (n\log n)/4$.

\subsection{The string of diamonds}
\label{sec:chain}
Let $m$ and $k\ge 2$ be positive integers, and} let $\mathcal{S}_{m,k}$ be the `{string of diamonds}' graph {given in Figure~\ref{fig:stringdiamonds}(Right),} where there are $m$ diamonds, each  consisting of ${k}$ edge-disjoint paths of length $2$ with the same end vetices.
A vertex with degree greater than two is called a \emph{hub}. 
There are $m+1$ hubs and $km$ non-hubs, giving a total of $n=km+m+1$ vertices.
It turns out that in this graph the asynchronous push\&pull protocol is much quicker than its synchronous variant.

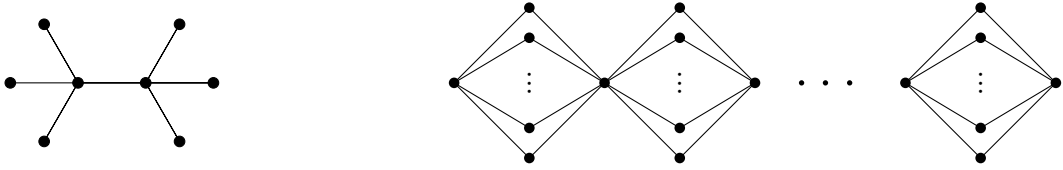
\begin{figure}
\centerline{\begin{tikzpicture}
\coordinate (v1) at (0,0) {};
\fill (v1) circle[radius=2pt] node {};
\coordinate (v2) at (1,1) {};
\fill (v2) circle[radius=2pt] node {};
\coordinate (v3) at (1,0.6) {};
\fill (v3) circle[radius=2pt] node {};
\coordinate (v4) at (1,-1) {};
\fill (v4) circle[radius=2pt] node {};
\coordinate (v6) at (2,0) {};
\fill (v6) circle[radius=2pt] node {};
\coordinate (v5) at (1,-.6) {};
\fill (v5) circle[radius=2pt] node {};
\coordinate (v8) at (3,1) {};
\fill (v8) circle[radius=2pt] node {};
\coordinate (v9) at (3,0.6) {};
\fill (v9) circle[radius=2pt] node {};
\coordinate (v10) at (3,-1) {};
\fill (v10) circle[radius=2pt] node {};
\coordinate (v11) at (4,0) {};
\fill (v11) circle[radius=2pt] node {};
\coordinate (v12) at (3,-.6) {};
\fill (v12) circle[radius=2pt] node {};
\coordinate (v18) at (6,0) {};
\fill (v18) circle[radius=2pt] node {};
\coordinate (v13) at (7,1) {};
\fill (v13) circle[radius=2pt] node {};
\coordinate (v14) at (7,0.6) {};
\fill (v14) circle[radius=2pt] node {};
\coordinate (v15) at (7,-1) {};
\fill (v15) circle[radius=2pt] node {};
\coordinate (v16) at (8,0) {};
\fill (v16) circle[radius=2pt] node {};
\coordinate (v17) at (7,-.6) {};
\fill (v17) circle[radius=2pt] node {};
\foreach \from/\to in {v18/v13, v18/v14,v18/v15,v18/v17, v17/v16, v13/v16,v14/v16,v15/v16}
  \draw (\from) -- (\to);
\path (v3) -- (v5) node [font=\large, midway, sloped] {...};
\path (v9) -- (v12) node [font=\large,midway, sloped] {...};
\path (v11) -- (v18) node [font=\huge, midway, sloped] {$\dots$};
\path (v14) -- (v17) node [font=\large, midway, sloped] {...};
\foreach \from/\to in {v1/v2, v1/v3, v1/v4, v2/v6, v3/v6,v4/v6,v1/v5,v6/v5, v6/v8,v6/v9,v6/v10,v6/v12, v11/v8,v11/v9,v11/v10,v11/v12}
  \draw (\from) -- (\to);
\begin{scope}[xshift=-5cm,scale=0.45]
\node at (0,0) [vtx]{};
\node at (-2,0) [vtx]{};
\node at (2,0) [vtx]{};
\node at (4,0) [vtx]{};
\node at (2+1,1.732) [vtx]{};
\node at (-1,1.732) [vtx]{};
\node at (2+1,-1.732) [vtx]{};
\node at (-1,-1.732) [vtx]{};
\draw(-2,0)--(0,0)
--(2,0)--(2+1,1.732)--(2,0)--(4,0)--(2,0)
--(2+1,-1.732)--(2,0)--(0,0)
--(-1,-1.732)--(0,0)
--(-1,1.732)--(0,0);
\end{scope}
\end{tikzpicture}}
\caption{Left: the double star graph $DS_8$,
which has a large guaranteed spread time in both variants.
Right: a string of diamonds, on which the asynchronous push\&pull protocol is much quicker than its synchronous variant.}
\label{fig:stringdiamonds}
\end{figure} 

Let us analyze the average spread times in the two protocols, starting with the asynchronous case.
Proposition~\ref{pro:comtime} gives that
for each edge $e$,
$$T(e)\samedist \erv(1/2 + 1/k) \stochle  \erv(1/2)$$
and that $\{T(e)\}_{e\in E}$ are independent. 
Between any two consecutive hubs there are $k$ disjoint paths of length 2,
so the communication time between them is stochastically dominated by $Z:=\min\{Z_1,\dots,Z_k\}$, where the $Z_i$  are independent random variables equal in distribution to the sum of two independent $\erv(1/2)$ random variables. 
\begin{lemma}
\label{lem:z}
We have
$\e{Z} = O(1/\sqrt k)$.
\end{lemma}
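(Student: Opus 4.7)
My plan is to compute (or at least bound) the tail function of $Z$ explicitly, and then recover the expectation via $\e{Z} = \int_0^\infty \p{Z>t}\, dt$. Each summand of $Z_i$ is $\erv(1/2)$, so $Z_i$ is Gamma-distributed with shape $2$ and rate $1/2$, which gives the clean closed form
$$\p{Z_i > t} = e^{-t/2}\left(1+t/2\right) \qquad (t\ge 0).$$
By independence of the $k$ disjoint length-$2$ paths, $\p{Z > t} = \bigl(e^{-t/2}(1+t/2)\bigr)^k$, and the problem reduces to showing
$$\int_0^\infty \bigl(e^{-t/2}(1+t/2)\bigr)^k\, dt = O(1/\sqrt k).$$

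The key point is that $f(t) := e^{-t/2}(1+t/2)$ is monotonically decreasing from $f(0)=1$, and its logarithm has a vanishing linear term at $0$. Writing $u = t/2$, I would Taylor-expand
$$\log f(t) = -u + \log(1+u) = -\tfrac{u^2}{2} + \tfrac{u^3}{3} - \cdots,$$
and use the alternating-series estimate to conclude that for, say, $t\in[0,2]$ (equivalently $u\in[0,1]$) we have $\log f(t) \le -t^2/24$. Raising to the $k$th power and comparing with a Gaussian integral then yields
$$\int_0^2 f(t)^k\, dt \le \int_0^\infty e^{-kt^2/24}\,dt = \sqrt{6\pi/k},$$
which is the desired $O(1/\sqrt k)$ bound.

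For the tail $t>2$, I would use monotonicity of $f$: since $f(2) = 2/e < 1$, we get $f(t)^k \le (2/e)^{k-1} f(t)$, so
$$\int_2^\infty f(t)^k\, dt \le (2/e)^{k-1}\int_0^\infty e^{-t/2}(1+t/2)\,dt = 4(2/e)^{k-1},$$
which decays exponentially in $k$ and is thus negligible compared to $1/\sqrt k$. Combining the two pieces gives $\e{Z} = O(1/\sqrt k)$.

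The only mild obstacle is the Taylor-series estimate on $[0,2]$: one must verify that the cubic and higher-order corrections do not kill the negative quadratic term. This is a one-line check using $u^3/3 \le u^2/3$ for $u\in[0,1]$, giving $-u^2/2 + u^3/3 \le -u^2/6$, i.e., $\log f(t)\le -t^2/24$. Everything else is a routine Gaussian-integral computation.
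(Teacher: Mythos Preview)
Your proof is correct and follows essentially the same route as the paper: bound the tail $\p{Z>t}$, split the integral $\int_0^\infty \p{Z>t}\,dt$ into a near-zero piece where a Gaussian-type bound $\p{Z>t}\le e^{-ckt^2}$ holds, and a far piece which decays exponentially in $k$. The only cosmetic difference is that you use the exact Gamma tail $\p{Z_1>t}=e^{-t/2}(1+t/2)$, whereas the paper uses the cruder bound $\p{Z_1>t}\le 1-(1-e^{-t/4})^2 = 2e^{-t/4}-e^{-t/2}$ and then the inequality $2e^{-t/4}-e^{-t/2}\le e^{-t^2/64}$ on $[0,4]$; this leads to slightly different constants and a different split point ($t=4$ instead of $t=2$), but the structure is identical.
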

\begin{proof}
For any $t\geq 0$ we have
$$\p{Z>t}=\prod_{i}\p{Z_i>t}
=\p{Z_1>t}^k
\le
\left(1-\p{\erv(1/2)\le t/2}^2 \right)^k
= \left(2e^{-t/4}-e^{-t/2}\right)^k\:.$$
Thus, using the inequality
$2e^{-t/4}-e^{-t/2} \le e^{-t^2/64}$
valid for $t\in[0,4]$, we find
$$
\e{Z}=\! \int_{0}^{\infty}  \!  \p{Z> t}\mathrm d t \le
\int_{0}^{4} e^{-kt^2/64} \mathrm{d} t +
\int_{4}^{\infty} (2e^{-t/4})^k \mathrm{d} t
\le
8 \sqrt{\pi / k} + \frac{2^{k+2}}{ke^k} = O(1/\sqrt k) \:.\qedhere$$\end{proof}

By Lemma~\ref{lem:z}, the expected time for all the hubs to learn the rumour is $O(mk^{-1/2})$. 
Once all the hubs learn the rumour, a degree 2 vertex pulls the rumour in Exp$(1)$ time and the expected value of the maximum of at most $km$ independent  Exp$(1)$ variables is $O(\log km)$. 
So by linearity of expectation,
$\wastapp(G)=O(\log n+mk^{-1/2})$.

In the synchronous case, for any $G$ we have $\wastspp(G)\ge\diam(G)$.
For this graph, we get
$\wastspp(G)\ge 2m$.
Choosing $k=\Theta\left((n/\log n)^{2/3}\right)$ and $m = \Theta\left(n^{1/3}(\log n)^{2/3}\right)$ gives 
$$\wastapp(G)= O(\log n) \mathrm{\ and\ } \wastspp(G) =\Omega(n^{1/3}(\log n)^{2/3}) \:.$$
This graph  has $\wastspp(G) / \wastapp(G) = \Omega\left( (n/\log n)^{1/3}\right)$ and is the example promised by Corollary~\ref{cor:ratio}.

\section{Extremal spread times for push\&pull protocols}
\label{sec:thm:extremal}
In this section we prove
Theorems~\ref{thm:extremal}
and~\ref{thm:extremalsynchronous}.
\subsection{Proof of~\eq{expectedguaranteed} and its tightness}
For a given $t\ge0$, consider the protocol which is the same  as push\&pull  except that,  if the rumour has not spread to all vertices by   time $t$, then the new process  reinitializes. Coupling the new process with push\&pull,  we obtain for any $k\in\{0,1,2,\dots\}$ that
\begin{equation}\label{eq:coupling}
\p{\sta(G,v) > kt} \le \p{\sta(G,v) > t}^k.
\end{equation}
and
\begin{equation}\label{eq:couplings}
\p{\sts(G,v) > kt} \le \p{\sts(G,v) > t}^k.
\end{equation}
Combining \eq{eq:coupling} with  
\[
\p{\sta(G,v) > e \e{\sta(G,v)}} < 1/e,
\]
 which comes directly from Markov's inequality,  we obtain
\[
\p{\sta(G,v) > e  \log n\,\e{\sta(G,v)}}  < 1/n.
\]
Since $\e{\sta(G,v)}\le \wastapp(G)$ for all $v$, this gives the right-hand inequality   in \eq{expectedguaranteed} directly from the definition of $\gstapp$. 
{This inequality is tight up to the constant factor, as the double star has $\wastapp(DS_n) = \Theta(n)$ and
$\gstapp(DS_n) = \Theta(n \log n)$
(see Section~\ref{sec:twostars}).}

To prove the left-hand inequality, let $\tau= \gstapp(G)$ and let $v$ be a vertex such that $\e{\sta(G,v)}=\wastapp(G)$. Then
\begin{align*}
\wastapp(G)&= \int_{0}^{\infty}\p{\sta(G,v)>t}dt= \sum_{i\in\Nz}\int_{i\tau}^{(i+1)\tau}\p{\sta(G,v)>t}dt  
 \le \sum_{i\in\Nz} \frac{\tau }{n^i}
\end{align*}
by~\eqref{eq:coupling} with $t=\tau$. Hence $\wastapp(G)\le \tau/(1-1/n)$. 
{
This inequality is tight up to a constant factor,
as the star has $\wastapp(G^*_n) = \Theta(\gstapp(G^*_n)) = \Theta (\log n)$ (see Section~\ref{sec:star}).}

\subsection{Proof of the right-hand bound in~\eq{expectedextremal} and its tightness}
\label{sec:on}
We will actually prove this using pull operations only.
Indeed we will show    $\wastapull(G) < 4n$, where the superscript $\mathsf {pull}$ means the `pull only' protocol.
{Since the path has $\wastapull(P_n) \ge \wastapp(P_n) = \Theta(n)$
(see Section~\ref{s:path}), this bound would be tight up to the constant factor.}

The proof is by induction: we prove that when there are precisely $m$ uninformed vertices, just $b$ of which have informed neighbours (we call these {$b$} vertices the \emph{boundary vertices}), the expected  remaining   time for the rumour to reach all vertices is at most $ 4 m - 2b$.  The inductive step is proved as follows. Let $I$ denote the set of informed vertices, $B$   the set of boundary vertices,  and $R$   the set of the remaining vertices. 
Let $|B|=b$ and $|B|+|R|=m$.  Let $d(v)$ denote the degree of $v$ in $G$ and, for a set $S$ of vertices, let $d_S(v)$ count the number of neighbours of $v$ in   $S$. 
We consider two cases.

Firstly, suppose that there exists a boundary vertex $v$ with $d_R(v) \ge d_B(v)$.
We can for the next step ignore all calls from vertices other than   $v$,  so the process is forced to wait  until $v$ is informed before any other vertices. This clearly gives an upper bound on the spread time.
The expected time taken for $v$ to pull the rumour from vertices in $I$ is  
\[
{\frac{d(v)}{ d_I(v)}= \frac{d_I(v)+d_R(v)+d_B(v)}{d_I(v)}\le 
1+\frac{2d_R(v)}{d_I(v)}
\le 1+2d_R(v)}.
\] 
Once $v$ is informed, the number of uninformed vertices decreases by 1,
and the number of boundary vertices increases by $d_R(v)-1$. The inductive hypothesis  concludes this case since
 \[
1+2d_R(v)+4(m-1)-2(b+d_R(v)-1)< 4m-2b.
\] 
Otherwise, if there is no such $v$, then any boundary vertex $v$ has a `pulling rate' of 
\[
\frac{d_I(v)}{d_I(v)+d_R(v)+d_B(v)}   \ge \frac{1}{1+d_R(v)+d_B(v)}\ge \frac{1}{2d_B(v)}\ge \frac{1}{2b}.
\] 
Since there are $b$ boundary vertices, 
together they have a pulling rate of at least $1/2$, so
the expected time until a boundary vertex is informed is at most 2.
Once this happens,
$m$ decreases by 1 and $b$ either does not decrease or decreases by $1$, and the inductive hypothesis concludes the proof.

\subsection{Proof of the left-hand bound in~\eq{expectedextremal} and its tightness}
\label{sec:proofleftexpectedextremal}
In this section we show for any vertex $v_0$ {of a graph $G$} we have
$\e{\sta(G,v_0)}>(\log n)/5$.
This is tight (up to the constant) as the star  has $\wastapp(G^*_n) = O(\log n)$ (see Section~\ref{sec:star}).
We give an argument for an equivalent protocol, defined below.

\begin{definition}[Two-clock-per-edge protocol]
On every edge place two exponential clocks, one near each end vertex. All clocks are independent. 
On an edge joining vertices $u$ and $v$, the clocks both have rate  $\deg(u)^{-1}+\deg(v)^{-1}$.
Note that this is the rate of calls along that edge, combined, from $u$ and $v$ (see Proposition~\ref{pro:comtime}).  
At any time that the clock near $u$ on an edge $uv$ rings,  and  $v$ knows the  rumour but $u$ does not, the rumour is passed to $u$. 
\end{definition}

\begin{proposition}
\label{pro:equivalent}
The two-clock-per-edge protocol is equivalent to the asynchronous push\&pull protocol.
\end{proposition}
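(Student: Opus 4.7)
The plan is to view each protocol as a continuous-time Markov chain on the lattice of possible informed sets $I\subseteq V(G)$ and to show that both chains have identical transition rates.

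Given the current informed set $I$ at some time $t$, for each edge $e=vw$ with $v\in I$ and $w\notin I$, I would establish that in both protocols the waiting time until $w$ is informed via $e$ is $\erv(1/\deg(v)+1/\deg(w))$-distributed, and that these waiting times are mutually independent across such edges. In the two-clock-per-edge protocol this is immediate: the relevant event is the next ring after time $t$ of the single clock near $w$ on edge $e$, whose rate is $1/\deg(v)+1/\deg(w)$, and independence follows from the fact that all edge-end clocks are mutually independent. In the push\&pull protocol, by Observation~\ref{obs:thinning} applied at each vertex, the events ``vertex $u$ calls neighbour $y$'' occur according to independent Poisson processes indexed by directed edges $(u,y)$, each of rate $1/\deg(u)$; for each $e=vw$ as above, the event that $w$ becomes informed via $e$ is the superposition of ``$v$ calls $w$'' (rate $1/\deg(v)$) and ``$w$ calls $v$'' (rate $1/\deg(w)$), giving a Poisson process of rate $1/\deg(v)+1/\deg(w)$.

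Combined with Observation~\ref{obs:restart}, which lets me restart all clocks after each informing event, this shows that both protocols are continuous-time Markov chains on $2^{V(G)}$ with identical transition rates from each state $I$ to its successor states $I\cup\{w\}$, establishing the claimed equivalence.

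The main obstacle is the independence claim in the push\&pull case when two of the relevant edges share a common endpoint, so that their call-processes are thinnings of a single vertex clock rather than of independent ones. Here one must invoke the thinning property explicitly: the sub-processes ``$v$'s clock rings and $v$ chooses $w_1$'' and ``$v$'s clock rings and $v$ chooses $w_2$'' are independent Poisson processes with rates $1/\deg(v)$ each. Once this is in hand, the remainder of the argument is routine bookkeeping.
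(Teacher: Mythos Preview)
Your proposal is correct and follows essentially the same approach as the paper: compare the instantaneous transition rates out of each informed set $I$ and observe that in both protocols the rate along a boundary edge $uv$ is $\deg(u)^{-1}+\deg(v)^{-1}$, with independence across edges. The paper's proof is terser and simply asserts the independence, whereas you spell out the Markov-chain framework and flag the thinning subtlety for edges sharing an endpoint; this extra care is warranted but the underlying argument is the same.
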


\begin{proof}
Consider an arbitrary moment during the execution of the two-clock-per-edge protocol.
Let $I$ denote the set of informed vertices.
For any edge $uv$ with $u\in I$ and $v\notin I$,
the rate of calls along $uv$ is $\deg(u)^{-1}+\deg(v)^{-1}$. 
Moreover, the edges act independently.
So, the behaviour of the protocol at this moment is exactly the same as that of the asynchronous push\&pull protocol.
Hence, the two protocols are equivalent.  
\end{proof}

In view of Proposition~\ref{pro:equivalent},
we may work with the two-clock-per-edge protocol instead.
Let $X_v$ be the time taken for the first clock located near $v$ to ring. 
Then $X_v$ is  distributed as  $\erv(f(v))$ where
$f(v)  = 1 + \sum \deg(u)^{-1}$, the sum being over all neighbours $u$ of $v$.
Hence,
$\sum f(v) = 2n$.

On the other hand, for a vertex $v\neq v_0$ to learn the rumour, at least one of clocks located near $v$ must ring. 
Thus 
$$\max \{ X_v : v \in V(G){\setminus \{v_0\}}\} \stochle \sta(G,v_0) \:.$$
Let $X=\max \{ X_v : v \in V(G){\setminus \{v_0\}}\}$.
Hence to prove
$\e{\sta(G,v_0)}>(\log n)/5$
it suffices to show 
$\e{X} > (\log n)/5$.

Let $\tau = \log(n-1)/3$
and $A = V(G)\setminus\{v_0\}$,
Then we have 
\begin{align*}
\p{X < \tau} &= 
\prod_{v\in A} (X_v < \tau)
= \prod_{v\in A} (1-e^{-\tau f(v)})
\le \exp\left(- \sum_{v\in A} e^{-\tau f(v)}\right)
\\&\le  \exp\left(-(n-1)e^{-\tau\sum_vf(v)/(n-1)}\right) 
\le  
\exp\left(-(n-1)e^{-3\tau}\right) =e^{-1} \:.
\end{align*}
Here the first inequality follows from $1-x\le e^{-x}$, the second from the arithmetic-geometric mean inequality, and the last one from 
$2n=\sum_v f(v) \le 3(n-1)$
which holds for $n\ge3$.
Consequently,
$$
\e{X}\ge \p{X\ge \tau} \tau
\ge (1-e^{-1}) \log(n-1)/3 > \frac15 \log n $$
for all large enough $n$.

\subsection{Proof of~\eq{guaranteedextremal} and  its tightness}
The bounds in~\eq{guaranteedextremal} follow immediately from~\eq{expectedguaranteed} and~\eq{expectedextremal}. 
The left-hand bound is tight as the star has
$\gstapp(G^*_n) = \Theta(\log n)$
(see Section~\ref{sec:star}),
and the right-hand bound is tight as the double star has
$\gstapp(DS_n) = \Theta(n \log n)$
(see Section~\ref{sec:twostars}).

\subsection{Proof of~\eq{expectedextremals} and its tightness}
\label{apex}
In this section we will prove $\wastspp (G) <4.6n $, which is tight up to the constant factor, as the path has diameter $n-1$ and hence $\wastspp(P_n) \ge n-1$.

The proof is similar to the one for the right-hand bound in~\eq{expectedextremal} given in Section~\ref{sec:on}. 
Let $\alpha= { \sqrt{e}/(\sqrt{e}-1)}$.
We consider the `pull only' protocol,
and will prove inductively that when there are $m$ uninformed vertices and $b$ boundary vertices, the expected remaining time for the rumour to reach all vertices is at most $(2+\alpha)m-2b$, and it follows that $\wastspp (G) < 4.6 n$.
The inductive step is proved as follows. Let $I$ denote the set of informed vertices, $B$   the set of boundary vertices,  and $R$   the set of the remaining vertices. 
Let $|B|=b$ and $|B|+|R|=m$.  Let $d(v)$ denote the degree of $v$ in $G$ and, for a set $S$ of vertices, let $d_S(v)$   denote the number of neighbours of $v$ in   $S$. 
Consider two cases.

Firstly, suppose that there is a vertex $v\in B$ such that $d_R(v)\ge d_B(v)$ (see Figure~\ref{fig:case1}(Left)).
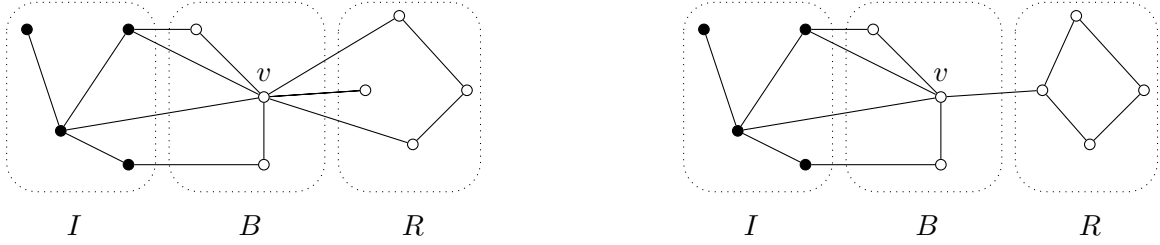
\begin{figure}
\centering
\begin{tikzpicture}[scale=0.9]
\node (f) at (0,0)[evtx,label=above:$v$]{};
\node (e) at (-1,1)[evtx]{};
\node (g) at (0,-1)[evtx]{};
\node (b) at (-2,1)[vtx]{};
\node (d) at (-2,-1)[vtx]{};
\node (c) at (-3,-0.5)[vtx]{};
\node (a) at (-3.5,1)[vtx]{};
\node (h) at (2,1.2)[evtx]{};
\node (i) at (1.5,0.1)[evtx]{};
\node (j) at (2.2,-0.7)[evtx]{};
\node (k) at (3,0.1)[evtx]{};
\draw(a)--(c)--(b)--(e)--(f)--(b);
\draw(c)--(f)--(g)--(d);
\draw(k)--(h)--(f)--(i)--(f)--(j)--(k);
\draw(c)--(d);

\draw [dotted, rounded corners = 4mm] (-1.4,1.4) rectangle (0.9,-1.4);
\node at (-0.2,-1.9){$B$};
\draw [dotted, rounded corners = 4mm] (-3.8,1.4) rectangle (-1.6,-1.4);
\node at (-2.8,-1.9){$I$};
\draw [dotted, rounded corners = 4mm] (1.1,1.4) rectangle (3.2,-1.4);
\node at (2.2,-1.9){$R$};

\begin{scope}[xshift=10cm]
\node (f) at (0,0)[evtx,label=above:$v$]{};
\node (e) at (-1,1)[evtx]{};
\node (g) at (0,-1)[evtx]{};
\node (b) at (-2,1)[vtx]{};
\node (d) at (-2,-1)[vtx]{};
\node (c) at (-3,-0.5)[vtx]{};
\node (a) at (-3.5,1)[vtx]{};
\node (h) at (2,1.2)[evtx]{};
\node (i) at (1.5,0.1)[evtx]{};
\node (j) at (2.2,-0.7)[evtx]{};
\node (k) at (3,0.1)[evtx]{};
\draw(a)--(c)--(b)--(e)--(f)--(b);
\draw(c)--(f)--(g)--(d);
\draw(k)--(h)--(i)--(j)--(k);
\draw(c)--(d);
\draw(f)--(i);

\draw [dotted, rounded corners = 4mm] (-1.4,1.4) rectangle (0.9,-1.4);
\node at (-0.2,-1.9){$B$};
\draw [dotted, rounded corners = 4mm] (-3.8,1.4) rectangle (-1.6,-1.4);
\node at (-2.8,-1.9){$I$};
\draw [dotted, rounded corners = 4mm] (1.1,1.4) rectangle (3.2,-1.4);
\node at (2.2,-1.9){$R$};
\end{scope}

\end{tikzpicture}
\caption{Left: first case in the proof of~\eq{expectedextremals}: there exists a boundary vertex $v$ with $3=d_R(v) \ge d_B(v)=2$.
Right: second case in the proof of~\eq{expectedextremals}: 
for all boundary vertices $v$ we have  $d_R(v) < d_B(v)$ 
(informed vertices are black, uninformed vertices are white).}
\label{fig:case1}
\end{figure}
In this case, for the next step, we ignore all calls from vertices other than $v$ and wait until $v$ is informed before any other uninformed vertex. This gives an upper bound on the spread time.
The expected time taken for $v$ to pull the rumour from vertices in $I$ is
\[ 
1+ \e{\geo\left(\frac {d_I(v)}{d(v)}\right)}=
\frac{d(v)}{d_I(v)}=\frac{d_I(v)+d_R(v)+d_B(v)}{d_I(v)} \le 2d_R(v)+1. 	\]
Once $v$ is informed, the number of uninformed vertices decreases by 1 and the number of boundary vertices increases by $d_R(v)-1$. By the inductive hypothesis the expected time for the spread of the rumour is at most
\[
2d_R(v)+1+(\alpha+2)(m-1)-2(b+d_R(v)-1)< (\alpha+2)m-2b.
\]

Next consider the case that $d_R(v)<d_B(v)$ for all $v\in B$ (see Figure~\ref{fig:case1}(Right)). 
For each boundary vertex $v$ we have
\[
\frac{d_I(v)}{d(v)}=
\frac{d_I(v)}{d_I(v)+d_R(v)+d_B(v)}   \ge \frac{1}{1+d_R(v)+d_B(v)}\ge \frac{1}{2d_B(v)}\ge \frac{1}{2b}.
\] 
Let $X$ denote the time taken until the next vertex is informed.
Then we have
$$X = 1 + \min \{X_1,X_2,\dots,X_b\}\:,$$
where the $X_i$'s are geometric random variables with parameters at least $1/2b$, and correspond to the waiting times of the boundary vertices,
and they are independent since we are considering pull operations only.
Thus we have
\begin{align*}
\e{X-1}
= \sum_{t\in\N}\p{X-1\ge t}
= \sum_{t\in\N}\prod_{i\in[b]} \p{X_i\ge t}
\le
\sum_{t\in\N}
\left(1-\frac{1}{2b}\right)^{tb}
\le
\sum_{t\in\N}
e^{-t/2}=\alpha-1\:,
\end{align*}
so a boundary vertex learns the rumour after at most $\alpha$ units of time on average, at which time the number of boundary vertices either does not decrease or decreases by 1. By inductive hypothesis again, the average spread time is at most
\[
\alpha+(\alpha+2)(m-1)-2(b-1)= (\alpha+2)m-2b,
\]
which completes the proof.

\subsection{Proof of~\eq{expectedguaranteeds} and~\eq{guaranteedextremals} and their tightness}
\label{sec:thm:extremalsynchronous}
The proof of~\eq{expectedguaranteeds} and its tightness are exactly the same as that for~\eq{expectedguaranteed}.
The bound $\gstspp (G) <4.6e n \log n$ is  a direct consequence of bounds~\eq{expectedextremals} and~\eq{expectedguaranteeds}. 
This bound is tight (up to the constant factor) as the double star has guaranteed spread time $\Theta(n\log n)$ (see Section~\ref{sec:twostars}).

\section{Comparison of the two protocols}
\label{sec:cor:ratio}
We first prove  Corollary~\ref{cor:ratio}
assuming Theorems~\ref{thm:allring}
and~\ref{thm:last},
and in the following subsections we prove
these theorems.
The left-hand bound in  Corollary~\ref{cor:ratio} follows from Theorem~\ref{thm:allring};
it is tight, up to the constant factor, as the star   has $\gstapp(G^*_n)=\Theta(\log n)$ and $\gstspp(G^*_n)=2$ (see Section~\ref{sec:star}).
The right-hand bound in  Corollary~\ref{cor:ratio} follows from
Theorem~\ref{thm:last} by choosing $\alpha=1/3$.
A graph $G$ was given in Section~\ref{sec:chain}
having
${\wastspp(G)}/{\wastapp(G)} = \Omega \left((n/\log n)^{1/3} \right )$.
Using~\eq{expectedguaranteed} and~\eq{expectedguaranteeds}, we get
${\gstspp(G)}/{\gstapp(G)} = \Omega \left(n^{1/3} (\log n)^{-4/3} \right)$ for this $G$.

\subsection{The lower bound}
In this section we prove
Theorem~\ref{thm:allring}.
Let $G$ be an $n$-vertex graph and let $s$ denote the vertex starting the rumour.
We give a coupling between the two versions.
Consider a `collection of calling lists for vertices': for every vertex $u$, we have an infinite list of vertices, each entry of which is a uniformly random neighbour of $u$, chosen independently from other entries, see Figure~\ref{fig:callinglist} for an example.
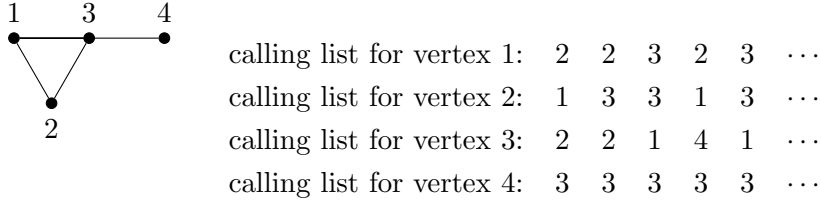
\begin{figure}
\centering
\begin{tabular}{l r}
\begin{tikzpicture}
\def \side {1cm}
\def \tside {2cm}
\node[vtx,label=above:1] at (0,0) (v2) {};
\node[vtx,label=below:2] at (-60:\side) (v1) {};
\node[vtx,label=above:3] at (0:\side) (v3) {};
\node[vtx,label=above:4] at (0:\tside) (v4) {};
\draw(v2)--(v3)--(v1)--(v2)--(v4);
\end{tikzpicture}
&
\begin{tabular}{l  c c c c c r}
calling list for vertex 1:& 2 & 2 & 3 & 2 & 3 & $\cdots$ \\
calling list for vertex 2:& 1 & 3 & 3 & 1 & 3 & $\cdots$ \\
calling list for vertex 3:& 2 & 2 & 1 & 4 & 1 & $\cdots$\\
calling list for vertex 4:& 3 & 3 & 3 & 3 & 3 & $\cdots$\\
\\
\end{tabular}
\end{tabular}
\caption{a particular outcome of the collection of calling lists for vertices}
\label{fig:callinglist}
\end{figure}

The coupling is built by using the same collection of calling lists for the two versions of the push\&pull protocol.
Note that $\sts(G,s)$ is determined by this collection,
but to determine $\sta(G,s)$ we also need to know the Poisson processes associated with the vertices.

We first prove that
$\wastapp (G) \leq  {H}_n \times \wastspp(G) $ (recall that ${H}_n$ denotes the $n$th harmonic number).
Consider  the asynchronous protocol and let $X_1$ be the first time such that all clocks have rung during the time interval $[0,X_1]$.
Let $X_2$ be the first time such that 
all clocks have rung during the time interval $(X_1,X_1+X_2]$,
and define $X_3,X_4,\dots$ similarly.
Partition $[0,\infty)$
into subintervals $[0,X_1]$, $(X_1,X_1+X_2]$,
$(X_1+X_2,X_1+X_2+X_3]$ etc.
Consider a `decelerated' variant $\mathcal{D}$ of the asynchronous push\&pull protocol in which each vertex makes a call the first time its clock rings in each subinterval, but ignores later clock rings in that subinterval (if any).
The spread time in $\mathcal{D}$ is stochastically larger than that in the asynchronous push\&pull protocol, so without loss of generality we may and will work with  $\mathcal{D}$.
Coupling $\mathcal{D}$ and the synchronous protocol using the same calling lists and using induction gives
$$\operatorname{ST}_{\mathcal{D}}(G,s)
\stochle X_1 + X_2 + \dots + X_{\sts(G,s)} \:.$$
Since the $X_i$ are i.i.d. and $\e{X_1} = {H}_n$,
Wald's equation (see, e.g.,~\cite[lemma~10.2.9]{grimmett_book}) gives
$$\e{\sta(G,s)}
\leq 
\e{\operatorname{ST}_{\mathcal{D}}(G,s)}
\leq {H}_n \times \e{\sts(G,s)} 
\leq {H}_n  \times \wastspp(G) \:,$$
as required.

Next we prove that $\gstapp (G) \leq 8  \gstspp(G) \log n $.
Let $B$ denote the event
`$\sts(G,s) \le 2 \gstspp(G)$', which depends on the calling lists only.
Inequality~\eq{eq:couplings} gives
$\p{B^c} \le 1/n^2$.
Partition the time interval $[0,2\gstspp(G) \times 4 \log n)$ into subintervals
$[0,4 \log n)$, $[4\log n,  8 \log n)$, etc.
Consider another `decelerated' variant $\mathcal{D}'$ of the asynchronous push\&pull protocol in which each vertex makes a call the first time its clock rings in each subinterval (if it does), but ignores later clock rings in that subinterval (if any).
The spread time in $\mathcal{D}'$ is stochastically larger than that in the asynchronous push\&pull protocol, so without loss of generality we may and will work with $\mathcal{D}'$.
Let $A$ denote the event
`during each of these $2\gstspp(G)$ subintervals, all clocks ring at least once.'
If $A$ happens, then an inductive argument gives that for any $1\le k \le 2\gstspp(G)$, the set of informed vertices in the $\mathcal{D}'$ at time $4k\log n$ {contains} the set of informed vertices after $k$ rounds of the synchronous version.
Hence, if both $A$ and $B$ happen,
then we would have
$$\sta(G,s) \le 
\operatorname{ST}_{\mathcal{D}'}(G,s) \leq 
(4\log n) \sts(G,s) \le
( 8 \log n) \gstspp(G) \:.$$
Hence to complete the proof, we need only show that
$\p {A^c} \le 1/n - 1/n^{2}$.
 
Let $I$ denote a given subinterval.
In the asynchronous version, the clock of any given vertex rings with probability at least $1-n^{-4}$ during $I$.
By the union bound, all clocks ring at least once during $I$, with probability at least $1-n^{-3}$.
The number of subintervals in the definition of $A$ is
$2\gstspp(G)$, which is $O(n\log n)$ by~\eq{guaranteedextremals}.
By the union bound again,
$\p{A^c} =O\left( \log n / n^{2} \right)$, as required.

Theorem~\ref{thm:allring} is tight, up to the constant factors, as the star has $\gstapp(G^*_n)=\Theta(\log n)$ and $\gstspp(G^*_n)=2$ (see Section~\ref{sec:star}).

\subsection{The upper bound}
In this section we prove Theorem~\ref{thm:last}.
Fix $\alpha \in [0,1)$. We want to prove
\begin{equation}\label{brv2}
{\gstspp(G)} \le  n^{1-\alpha} + 64 {\gstapp(G)} n ^{(1+\alpha)/2}  \:.
\end{equation}
We first sketch the proof.
The main ingredients in the proof are a coupling between the two protocols, and sharp concentration bounds. Consider the asynchronous version. List the vertices in the order their clocks ring. The list ends once all the vertices {are informed}. Now consider the natural coupling between the two protocols, the synchronous actions follow the same {ordering} as in the list. 
We partition the list into blocks according to a certain rule {in such a way that the} blocks have the following property:
the synchronous {protocol} in each round will inform a superset of the set of vertices informed by the asynchronous {variant} in any single block.   
{For example,} if we require that in each block each vertex communicates with the others at most once, then we would have this property. However, in order to get our bound, we need to use a more {delicate} rule for building the blocks. To conclude, we find an upper bound for the number of blocks, which coincides with the right-hand side of \eqref{brv2}.

We now give the details.
{Let us fix an arbitrary} starting vertex.
Let $B_1,B_2,\dots$ be an i.i.d.\ sequence of vertices, where $B_i$ is a {uniformly random} vertex of $G$.
For each $i$, let $W_i$ be a uniformly random neighbour of $B_i$, chosen independently of all other choices.
Hence, $W_1,W_2,\dots$ is also an i.i.d.\ sequence of vertices (not necessarily having uniform distribution).
We define a coupling between the two protocols by using the two sequences $(B_i)_{i\in\N}$ and $(W_i)_{i\in\N}$.

To define the coupled asynchronous scenario, we also need to know the ringing times of the clocks.
Let $Z_1,Z_2,\dots$ be a sequence of i.i.d.\ exponentials with rate $n$ (and mean $1/n$), and let this sequence be independent of $(B_i)_{i\in\N}$ and $(W_i)_{i\in\N}$.
Then the coupled asynchronous scenario proceeds as follows:
at time $Z_1$ the clock of vertex $B_1$ rings and it contacts $W_1$,
then at time $Z_1+Z_2$ the clock of $B_2$ rings and it contacts $W_2$, and so on.

We now define a third rumour spreading scenario,
which corresponds to the so-called \emph{sequential protocol}~\cite{mixing}.\label{pageseq}
This protocol works as the asynchronous one except we put $Z_i=1$ for all $i$.
Hence, the scenario only depends on the sequences $(B_i)_{i\in\N}$ and $(W_i)_{i\in\N}$.
Let $N$ denote the first time that this protocol has informed all the vertices.
Note that $N\ge {n-1}$ and $N$ takes integral values.
Observe that, in the asynchronous scenario, all vertices are informed right after $N$ clocks have rung,
and the spread time is $\sum_{i=1}^{N}Z_i$.
The following lemma relates $\gstapp(G)$ and $N$.

\begin{lemma}
\label{lem:eventA}
Define the event 
$$\mathcal A := \{ N \le 4 n \gstapp(G) \} \:.$$
Then we have $\p{\mathcal A} \ge 1 - O(1/n^2)$.
\end{lemma}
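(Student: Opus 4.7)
The plan is to set $M := \lceil 4n\gstapp(G) \rceil$ and bound $\p{N > M}$ by splitting on whether the total time accumulated after $M$ clock rings is large. The key observation is that on the event $\{N > M\}$, fewer than all vertices have been informed after $M$ rings, so $\sta(G) = \sum_{i=1}^N Z_i \ge \sum_{i=1}^M Z_i$ (all $Z_i \ge 0$). This yields the decomposition
\[
\p{N > M} \;\le\; \p{\sta(G,v) \ge 2\gstapp(G)} \;+\; \p{\sum_{i=1}^M Z_i < 2\gstapp(G)},
\]
where $v$ is the (arbitrarily fixed) starting vertex.

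The first term is handled by the definition of $\gstapp(G)$ together with the self-coupling bound~\eq{eq:coupling}: applying it with $k=2$ and $t = \gstapp(G)$ gives $\p{\sta(G,v) > 2\gstapp(G)} \le (1/n)^2 = 1/n^2$. This step relies only on the definition of $\gstapp$ and on~\eq{eq:coupling}, and requires no information about the particular coupling used here.

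The second term is a concentration estimate for a sum of independent exponentials. Since $S_M := \sum_{i=1}^M Z_i$ is $\text{Gamma}(M,n)$-distributed with mean $M/n \ge 4\gstapp(G)$, a standard Chernoff bound gives $\p{S_M < M/(2n)} \le e^{-cM}$ for an absolute constant $c > 0$. By the lower bound~\eq{guaranteedextremal} we have $\gstapp(G) \ge (\log n)/5$, so $M = \Omega(n\log n)$ and hence $e^{-cM} = n^{-\Omega(n)} = o(1/n^2)$. Combining the two terms yields $\p{\mathcal A^c} = O(1/n^2)$, as required.

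The only subtlety—which I do not expect to be a serious obstacle—is making sure the exponentials $(Z_i)$ and the stopping index $N$ are handled correctly. In the coupling as defined, $N$ is a function only of $(B_i, W_i)_{i \in \N}$, while $(Z_i)_{i \in \N}$ is an independent sequence; therefore the pointwise inequality $\sum_{i=1}^N Z_i \ge \sum_{i=1}^M Z_i$ on $\{N>M\}$ is unproblematic and no stopping-time gymnastics are needed. The main quantitative ingredient is simply that $\gstapp(G)$ is large enough (logarithmic in $n$) for the Chernoff tail to beat any polynomial in $1/n$.
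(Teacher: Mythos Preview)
Your proof is correct and follows essentially the same approach as the paper: both combine the self-coupling bound~\eqref{eq:coupling} (to get $\p{\sta(G,v)>2\gstapp(G)}\le 1/n^2$) with a Cram\'er/Chernoff tail bound for a sum of i.i.d.\ exponentials. The only organizational difference is that the paper conditions on the random value of $N$ and uses the deterministic fact $N\ge n-1$ to make the exponential tail summable, whereas you work directly with the deterministic threshold $M=4n\gstapp(G)$ and invoke the lower bound $\gstapp(G)\ge(\log n)/5$ from~\eqref{guaranteedextremal} to ensure $M$ is large; both routes give the same $O(1/n^2)$ bound.
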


\begin{proof}
As the spread time of the asynchronous scenario is $\sum_{i=1}^{N}Z_i$,
by definition of $\gstapp$ we have
$\p{\sum_{i=1}^{N}Z_i > \gstapp(G)} {\le} 1/n$.
By~\eqref{eq:coupling} we have
$$
\p{\sum_{i=1}^{N}Z_i > 2 \gstapp(G)} {\le} 1/n^2 \:.$$
So we need only show that
\begin{equation*}
\p{N > 2 \sum_{i=1}^{N} nZ_i } = O(1/n^2) \:.
\end{equation*}
Since $Z_i$'s are i.i.d.\ exponentials with rate $n$,
the random variables $nZ_{i}$ are i.i.d.\ exponentials with rate 1, so for any fixed $t$, 
Cram\'{e}r's Theorem gives
$$\p{\sum_{i=1}^{t} nZ_i <t/2} = e^{-ct}$$
for some positive constant $c$.
Since $N\ge n-1$, we have
$$\p{N > 2 \sum_{i=1}^{N} nZ_i } 
\le \sum_{t=n-1}^{\infty} 
\p{N > 2 \sum_{i=1}^{N} nZ_i \cond N = t} 
= \sum_{t=n-1}^{\infty} e^{-ct} = \exp(-\Omega(n))\:,
$$
as required.
\end{proof}

To define the coupled synchronous scenario, we 
need some definitions.
For each vertex $v$, let $\pi(v)$ denote the probability that $W_j = v$. 
Recall that this probability does not depend on $j$.  
Call a vertex $v$ \emph{special} if $\pi(v) > n^{\alpha-1}$. 
Note that 
since $\sum \pi(v)=1$,
there are less than $n^{1-\alpha}$ special vertices. 

We partition the list $B_1,W_1,B_2,W_2,\dots$ into infinitely many finite \emph{blocks} as follows.
The first block is of the form
$$B_1,W_1,B_2,W_2,\dots,B_j,W_j\:,$$
with $j$ as large as possible, subject to the following conditions:
\begin{enumerate}
\item
We have $B_i \notin \{B_1,W_1,\dots,B_{i-1},W_{i-1}\}$ 
for all $1< i \le j$.
\item
If $W_i \in \{B_1,W_1,\dots,B_{i-1},W_{i-1}\}$ for some $1< i \le j$,
then $W_i$ is special.
\end{enumerate}
Note that we choose the block to be as long as possible, hence we stop at $W_j$ only if $B_{j+1}$ already appears in $B_1,\dots,W_j$,
or $W_{j+1}$ is non-special and it appears in $B_1,\dots,W_j$, or both.
If we have stopped at $W_j$, then a new block is started from $B_{j+1}$, and this process is iterated forever to define all the blocks.  
Note that each block has an even number of elements.

Let $S_1,S_2,\dots$ denote the sizes of the blocks,
and let $N_b$ be the smallest number such that
$$S_1 + S_2 + \dots + S_{N_b} \ge 2N \:.$$
The following lemma relates the spread time of the synchronous protocol and $N_b$.

\begin{lemma}
\label{lem:couple_synch}
The spread time of the synchronous push\&pull protocol is stochastically {smaller than} $N_b + n^{1-\alpha}$.
\end{lemma}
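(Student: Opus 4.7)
The plan is to use the sequences $(B_i)_{i\ge 1}$ and $(W_i)_{i\ge 1}$ to build an explicit coupling between the synchronous and sequential protocols: in the $k$-th synchronous round, every vertex $v$ that appears as $B_i$ for some step $i$ belonging to the $k$-th block calls $W_i$, while all other vertices make independent auxiliary uniform random calls. This assignment is consistent because within a block the $B_i$'s are distinct (and in fact disjoint from all prior $W_{i'}$'s in that block by the first block rule), and it is a legitimate synchronous execution because each $W_i$ is a uniformly random neighbor of $B_i$.

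Writing $\hat I_k$ for the synchronous informed set after $k$ rounds and $I_k$ for the sequential informed set after $k$ blocks, the heart of the proof is an inductive block-by-block comparison showing that $\hat I_k \cup V_s \supseteq I_k$, where $V_s$ is the set of special vertices, of size $|V_s| < n^{1-\alpha}$. The inductive step exploits the block definition twice: each caller $B_i$ does not appear elsewhere in its block, so the sequential status of $B_i$ at step $i$ equals its block-start status; and each non-special callee $W_i$ also does not appear elsewhere in its block (this is exactly what the block's second rule enforces), so its status at step $i$ equals its block-start status. Hence every push or pull through a non-special callee is faithfully replicated by the coupled synchronous round, and the only sources of mismatch are cascade events through special callees, whose effect on any newly uninformed-in-synchronous vertex can be accounted for within $V_s$.

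Applying the invariant at $k = N_b$, the sequential has informed all of $V$, so the synchronous has informed every non-special vertex; fewer than $n^{1-\alpha}$ vertices remain uninformed. For the endgame, I use connectivity of $G$ together with the large weight $\sum_{u\sim v} 1/\deg(u) > n^{\alpha}$ characterizing each special $v$ to argue that each subsequent synchronous round eliminates at least one vertex from the uninformed set (any uninformed special vertex with all neighbors informed is certain to be pulled in one round, and since the residual uninformed subgraph has $<n^{1-\alpha}$ vertices this ripples through). The main technical obstacle is the inductive step when a caller $B_i$ is itself special and has not yet been informed in synchronous, which can in principle cause the sequential push from $B_i$ to be lost and thereby produce a non-special deficit; resolving this likely requires either a sharpened invariant that also tracks chains of deficits propagated through special vertices across consecutive blocks, or an explicit absorption of such deficits into the extra $n^{1-\alpha}$ rounds reserved for the endgame.
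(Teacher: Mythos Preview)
Your coupling construction and the idea of a block-by-block comparison are the right setup, but both loose ends you flag are genuine gaps, and neither can be closed along the lines you suggest.

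The invariant $\hat I_k \cup V_s \supseteq I_k$ really does fail, for the push reason you name and for a symmetric pull reason you did not: if $W_i\in V_s\setminus \hat I_{k-1}$ while $B_i$ is non-special and uninformed, then sequentially $B_i$ pulls from $W_i$ and enters $I_k$, but in the coupled synchronous round $W_i$ is still uninformed, so $B_i$ remains outside $\hat I_k\cup V_s$. Once a non-special deficit appears it can propagate through later blocks, so no bounded absorbing set will repair the invariant.

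The endgame is also not deterministic, which is what stochastic domination requires here. After round $N_b$ the uninformed set $U\subseteq V_s$ may induce edges (nothing forbids two special vertices from being adjacent), so no vertex of $U$ need have all neighbours informed; pulls then succeed only with some probability in each round. At best you would obtain a high-probability bound, not a coupled inequality of the form ``synchronous spread time $\le N_b+n^{1-\alpha}$''.

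The paper sidesteps both issues by a different, cleaner device: rather than weaken the invariant, it \emph{refines the blocks} so that the lazy synchronous scenario matches the sequential scenario exactly. For each special vertex $v$, let $i$ be the sequential step at which $v$ is first informed; if $B_i,W_i,B_{i+1},W_{i+1}$ lie in the same block, split it immediately after $W_i$. There are fewer than $n^{1-\alpha}$ special vertices, hence fewer than $n^{1-\alpha}$ extra blocks. In any refined block, a repeated vertex must be special (by the original block rules), and by the splitting it cannot be informed for the first time within the block and then reappear later in that same block. Consequently no informing path of length greater than one can occur inside a refined block, so the lazy scenario after $k$ rounds has informed \emph{exactly} the set the sequential scenario has informed after $k$ refined blocks. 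Since the sequential protocol finishes within the first $N_b$ original blocks, it finishes within at most $N_b+n^{1-\alpha}$ refined blocks, and hence so does the lazy (and a fortiori the synchronous) scenario, deterministically. The $n^{1-\alpha}$ is spent on the extra block splits, not on a separate endgame for special vertices.
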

\begin{proof}
In this proof we only consider the finite list
$B_1,W_1,\dots,B_{N},W_{N}$,
which is partitioned into blocks as discussed before.
We further split the blocks into smaller ones according to the following rule.
Let $v$ be a special vertex and assume that in the sequential scenario, it is informed exactly at time $i$.
So, either $B_i=v$ or $W_i=v$.
If the elements $B_i,W_i,B_{i+1},W_{i+1}$ are contained in the same block, then we split this block at this point, putting everything up to $B_i,W_i$ in one block and $B_{i+1},W_{i+1}$ and everything after in the other one.
Since the number of splits equals the number of special vertices,
and there are less than 
$n^{1-\alpha}$ special vertices,
the new total number of blocks is {less than}
$N_b + n^{1-\alpha}$.
We work with these refined blocks for the rest of the proof.

{We couple with a modified version of the synchronous push\&pull protocol, which we call the \emph{lazy} scenario.}
We define the coupled lazy scenario inductively using the blocks.
Assume that the $k$th block is
$$B_i,W_i,B_{i+1},W_{i+1},\dots,B_j,W_j \:.$$
Then in the $k$th round of the {lazy} scenario,
vertex $B_i$ contacts $W_i$, 
vertex $B_{i+1}$ contacts $W_{i+1}$ and so on, up until vertex $B_j$ contacts $W_j$
(all these communications happen at the same time).
Moreover, a vertex that does not appear in this block,
does not perform any action in the $k$th round.
It is clear that stochastic upper bounds for the spread time of this lazy {scenario} carries over to the {synchronous push\&pull scenario}.

To complete the proof we will show that
the set of vertices informed by the lazy scenario after $k$ rounds equals the set of vertices informed by the sequential scenario right after time $(S_1+\dots+S_k)/2$.
{(The factor of $2$ appears here because a block with $r$ communications has length $2r$.)} 
The proof proceeds by induction.
Assume that the $k$th block is
$$B_i,W_i,B_{i+1},W_{i+1},\dots,B_j,W_j \:.$$
If no repetition happens in this block at all,
then it is clear that {the} lazy {scenario} in one round informs every vertex which {the} sequential {one} informs during times $i,i+1,\dots,j$.
Notice the possible problem if a repetition happens:
if during this block, $x$ contacts $y$ and tells her the rumour for the first time,
and $z$ also contacts $y$ and asks her the rumour,
then in the sequential scenario both $y$ and $z$ will learn the rumour by time $j$,
whereas in the lazy scenario this is not the case because these operations happen at exactly the same time.
However, if $v$ is a repeated vertex in this block,
then $v$ is a special vertex,
and moreover by the secondary splitting of the blocks,
we know that it cannot be the case that 
$v$ is informed in this block for the first time
and appears again later in the block.
Hence, no `informing path' of length greater than one can appear in this block, and the proof is complete.
\end{proof}

Let $k = 64 \gstapp(G) n^{(1+\alpha)/2}$.
The following lemma bounds $N_b$.

\begin{lemma}
\label{lem:eventB}
Define the event 
$$\mathcal{B} := \{ S_1 + \dots + S_k \ge 8 \gstapp(G) n\} \:.$$
Then we have $\p{\mathcal{B}} \ge 1 - O(1/n^2)$.
\end{lemma}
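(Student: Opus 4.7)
The plan is to use the fact that the pairs $(B_i,W_i)_{i\in\N}$ are i.i.d.\ to reduce the statement to a single-block lower tail bound, and then invoke Chernoff concentration. Since each block is a deterministic function of a consecutive stretch of the pairs and the stopping rule never looks outside its own block, the block sizes $S_1,S_2,\dots$ are themselves i.i.d. This is the one observation that unlocks everything else.

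First I would establish the single-block estimate
\[
\p{S_1 \ge 2j} \ \ge\ 1 - 2 j^2 n^{\alpha-1} \qquad (j \ge 1)
\]
by a union bound over the steps of the block. At step $i \ge 2$, the block terminates only if $B_i$ equals one of the at most $2(i-1)$ previous block elements, or if $W_i$ equals such an element and is non-special. Since $B_i$ is uniform on $V(G)$, the first event has probability at most $2(i-1)/n$; since $W_i$ has distribution $\pi$ and $\pi(v)\le n^{\alpha-1}$ for every non-special $v$, the second event has probability at most $2(i-1)n^{\alpha-1}$. Using $\alpha\ge 0$ to combine these per-step failure probabilities, and applying $\prod(1-x_i)\ge 1-\sum x_i$ to the telescoping product over $i \le j$, yields the claim.

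Now set $L := n^{(1-\alpha)/2}/3$ and apply the bound with $2j = L$: a quick calculation gives $\p{S_1 \ge L} \ge 17/18$ for all sufficiently large $n$. Let $Y_i := \1\{S_i \ge L\}$; these are i.i.d.\ Bernoulli variables with mean at least $7/8$. Chernoff's inequality yields
\[
\p{Y_1 + \cdots + Y_k < 3k/4} \ \le\ \exp(-\Omega(k)).
\]
Because $\gstapp(G) \ge \tfrac15 \log n$ by \eq{guaranteedextremal}, we have $k = 64\,\gstapp(G)\,n^{(1+\alpha)/2} = \omega(\log n)$, so this failure probability is $O(1/n^2)$. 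On the complementary event,
\[
S_1 + \cdots + S_k \ \ge\ \tfrac{3k}{4}\, L \ =\ 16\,\gstapp(G)\, n \ \ge\ 8\,\gstapp(G)\, n,
\]
which is exactly the event $\mathcal B$.

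I expect the single-block tail bound to be the main substantive step: one must juggle the two stopping conditions at once, and it is precisely the definition of ``special'' ($\pi(v) > n^{\alpha-1}$) that makes the $W_i$-repetition contribution small enough to be absorbed into the first term. The i.i.d.\ structure of the block sizes and the concluding Chernoff calculation are then routine.
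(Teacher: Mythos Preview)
Your proof is correct and follows essentially the same route as the paper's: observe that the block sizes $S_1,S_2,\dots$ are i.i.d., establish that a single block has size at least $\Theta\bigl(n^{(1-\alpha)/2}\bigr)$ with constant probability, and then apply a Chernoff bound to the indicator variables. The only cosmetic differences are that the paper bounds the single-block probability via the telescoping product $\prod_j \exp(-8jn^{\alpha-1})$ (obtaining probability $\ge 1/2$ with threshold $2\ell = n^{(1-\alpha)/2}/2$), whereas you use an additive union bound (obtaining probability $\ge 17/18$ with threshold $L = n^{(1-\alpha)/2}/3$); the constants are then carried through the final arithmetic in the obvious way.
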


Before proving this lemma, let us see why it concludes the proof of Theorem~\ref{thm:last}.
By Lemmas~\ref{lem:eventA} and~\ref{lem:eventB} and the union bound, with probability at least $1-1/n$
both events $\mathcal A$ and $\mathcal B$ happen.
Assume this is the case.
Then we have
$$S_1 + \dots + S_k \ge 8 \gstapp(G) n \ge 2N \:,$$
which means $N_b \le k$ by the definition of $N_b$.
Together with Lemma~\ref{lem:couple_synch},
this implies that with probability at least $1-1/n$, the spread time of the {synchronous push\&pull protocol} is at most $k + n^{1-\alpha}$, which gives~\eqref{brv2}.

\begin{proof}[Proof of Lemma~\ref{lem:eventB}.]
Let $\ell = n^{(1-\alpha)/2}/4$.
We first show that
\begin{equation}
\label{largeblocks}
\p{S_1 > 2\ell} \ge 1/2 \:.
\end{equation}
Let $j$ be arbitrary.
We compute the conditional probability of $\{S_1\ge 2j+2\}$ given that $\{S_1 \ge 2j\}$. 
On the event $\{ S_1 \ge 2 j\}$, the conditional  probability that $B_{j+1}$ is a repetition of a vertex already in the block $B_1,W_1,B_2,W_2,\dots,B_j,W_j$ is $2j/n$. 
The probability that $W_{j+1}$ is a repetition of a non-special vertex in the block is bounded above by 
$2j n^{\alpha -1}$, since there are at most $2j$ distinct vertices in the block so far, and 
$W_{j+1}$ is a given non-special vertex with probability at most $n^{\alpha-1}$.
So, we have
$$\p{S_1 \ge 2j+2 | S_1 \ge 2j} \ge 1 - 4j n^{\alpha-1}
\ge \exp ( - 8 j n^{\alpha-1} ) \:.
$$
Consequently,
$$
\p{S_1 > 2\ell}
\ge 
\prod_{j=1}^{\ell} \exp ( - 8 j n^{\alpha-1} )
= \exp \left ( -4 n^{\alpha-1} \ell(\ell+1) \right )
\ge 1/2
$$
by the choice of $\ell$, so \eqref{largeblocks} holds.

Observe that the block sizes $S_1,S_2,\dots$ are i.i.d.,
and each of them is at least $2\ell$ with probability at least $1/2$.
So we have
$$\p{ S_1 + \dots + S_k \le k \ell / 2} 
\le
\p{\operatorname{Bin}(k,1/2) \le k/4}
\le \exp(-k/16) = O(1/n^2) \:,
$$
where for the second inequality, we have used the Chernoff bound for binomials (see, e.g.,~\cite[Theorem~2.3(c)]{McD98}).
\end{proof}

\bibliographystyle{plain}
\bibliography{rumourspreadingN}

\end{document}